\newcolumntype{Y}{>{\raggedright\arraybackslash}X}
\newcommand{\keyword}[1]{\par\noindent \textbf{Keywords:} #1 }
\newtheorem{defi}{Definition}[section]
\newtheorem{lemma}[defi]{Lemma}
\newtheorem{proposition}[defi]{Proposition}
\newtheorem{theorem}[defi]{Theorem}
\newtheorem{definition}[defi]{Definition}
\newtheorem{assumption}{Assumption}
\newtheorem{corollary}[defi]{Corollary}
\title{Renormalized Energy and Vortex Interaction in Finsler Ginzburg--Landau Models}
\author
{Y. Alipour Fakhri\thanks{Corresponding Author:
Faculty of Basic Sciences,
Department of Mathematics, Payame Noor University, Tehran, Iran.  E-mail: y\_ alipour@pnu.ac.ir}
}
\begin{document}
\maketitle

\begin{abstract}
We develop a Finsler Ginzburg--Landau framework for the analysis of vortex interactions in anisotropic superconductors.  
Within this setting, the Finsler structure encodes directional dependence of the condensate energy, yielding a renormalized energy $W_F$ that governs both equilibrium and dynamics of vortices.  
We derive the $\Gamma$--limit, establish the analytical structure and stability of $W_F$, and show that vortex motion follows a Finsler gradient flow exhibiting anisotropic dissipation and drift.  
This approach provides a unified geometric and physical model for anisotropic superconductivity.
\end{abstract}

\keyword{ Finsler geometry; Ginzburg--Landau vortices; anisotropic superconductivity; renormalized energy; gradient flow; vortex stability.}
 \\*
 \textbf{[2020] Mathematics Subject Classification}:53C60, 35Q56, 35J50, 82D55, 35B40.

\section{Introduction}\label{sec:Introduction}
The Ginzburg--Landau theory provides a fundamental framework for describing superconductivity through a complex order parameter $\psi$ and a gauge field $A$. In its classical form, this theory successfully captures the formation of quantized vortices, magnetic flux quantization, and the emergence of vortex lattices in type-II superconductors. The mathematical structure of the Ginzburg--Landau model has been extensively investigated from variational and analytical perspectives, leading to the development of the renormalized energy formalism, which describes the effective interaction among vortices in the limit of vanishing coherence length $\varepsilon \to 0$ \cite{Serfaty2014,IgnatJerrard2020}. In this asymptotic regime, the energy functional reduces to a geometric interaction law among discrete vortex centers, and its minimizers determine the equilibrium arrangements of vortices.

Despite the success of this approach in isotropic systems, it is insufficient to capture the anisotropic behavior exhibited by many real superconductors. In crystalline or layered materials, physical properties such as coherence length, magnetic penetration depth, and critical current density depend on direction, leading to anisotropic vortex motion and distorted lattice patterns. These effects cannot be described within the Euclidean framework underlying the standard Ginzburg--Landau model, which assumes isotropic quadratic dependence on the gradient of $\psi$. To overcome this limitation, several geometric generalizations have been proposed, replacing the Euclidean structure by a more general metric or tensorial dependence. However, these formulations remain essentially Riemannian, and therefore unable to capture directional dependence beyond quadratic symmetry.

A natural and rigorous way to model anisotropy in superconductors is provided by Finsler geometry. In this framework, the norm of a tangent vector $v\in T_xM$ is given by a position-dependent, non-quadratic function $F(x,v)$ that encodes directional information. Finsler geometry extends Riemannian geometry by allowing distinct local metric structures in different directions, making it ideally suited for describing anisotropic media and non-reciprocal transport phenomena \cite{BaoChernShen2000}. Within this setting, the differential operators, measure structures, and variational principles acquire direction-dependent weights, naturally reproducing anisotropic effects at the level of the Ginzburg--Landau equations.

In the companion work \cite{AlipourFakhri2025}, a Finsler Ginzburg--Landau model was introduced as a variational functional of the form
\begin{align*}
G_F[\psi,A] = \int_M \left( \frac{1}{2}\|D_A\psi\|_{F^*}^2 + \frac{1}{2\lambda}\|dA\|_{\gamma^*}^2 + \frac{1}{4\varepsilon^2}(1-|\psi|^2)^2 \right)d\mu_F,
\end{align*}
defined on a compact Finsler manifold $(M,F)$ with measure $d\mu_F$. The co-metric $F^*$ replaces the Euclidean norm in the kinetic term, yielding an anisotropic energy landscape governed by the geometry of the unit co-ball $\{\xi:F^*(x,\xi)\le 1\}$. Minimizers of this energy exhibit quantized vortices whose cores are determined by the zeros of $\psi$, while the gauge field $A$ mediates long-range magnetic interaction among them. The $\Gamma$--limit of this functional as $\varepsilon\to 0$ produces a geometric measure-theoretic description of vorticity supported on $(n-2)$--dimensional submanifolds \cite{IgnatJerrard2020}, and defines an anisotropic limiting energy that depends explicitly on the Finsler structure.

The present paper develops the analytical and physical theory of renormalized energy and vortex interaction in the Finsler Ginzburg--Landau framework.  
The analysis begins in Section~\ref{sec:Geometric Framework}, where the geometric structure of the model and the definition of the renormalized energy $W_F$ are introduced through the asymptotic expansion of the Finslerian functional.  
Section~\ref{sec:AnalyticalStructure} establishes the analytic properties of $W_F$, deriving its differential structure, Finsler gradient, and equilibrium conditions.  
In Section~\ref{sec:PhysicalInterpretation}, the physical meaning of these quantities is examined, revealing how anisotropy influences vortex alignment and the formation of directionally biased lattice structures in superconductors.  
Section~\ref{sec:Stability Analysis} develops the second-order structure of $W_F$, linking the Finslerian Hessian to the effective elastic response of the condensate and to the stability of vortex configurations.  
Finally, in Section~\ref{sec:AsymptoticDynamics}, the asymptotic dynamics of vortices are derived as a Finsler gradient flow, providing a physical picture of energy dissipation and anisotropic drift in the vortex system.

The main objective of this work is to demonstrate that the Finsler Ginzburg--Landau model naturally unifies the geometric and physical descriptions of anisotropic superconductivity.  
Within this setting, the anisotropy of the superconducting medium is represented directly by the Finsler structure, without resorting to ad hoc anisotropic coefficients.  
The renormalized energy $W_F$ thus serves both as a geometric invariant associated with the Finsler metric and as a physical potential governing the equilibrium and dynamics of vortices.  
This dual interpretation establishes a rigorous link between the geometry of Finsler spaces and the phenomenology of anisotropic superconductors, offering a new mathematical framework for studying directionally dependent superconducting phenomena from first principles.

\section{Renormalized Energy Formulation}
\label{sec:Geometric Framework}

The purpose of this section is to establish the analytic and geometric setting in which the renormalized energy for vortices in Finsler Ginzburg--Landau models is defined. We begin by recalling the asymptotic regime of the Finslerian Ginzburg--Landau functional introduced in \cite{AlipourFakhri2025}, together with the associated $\Gamma$--limit, and then proceed to derive the higher-order expansion that leads to the definition of the renormalized energy $W_F$.

Let $(M,F)$ be a compact, oriented, smooth Finsler manifold of dimension $n \geq 2$, endowed with a smooth Finsler measure $d\mu_F$ (either Busemann--Hausdorff or Holmes--Thompson) and a Riemannian co-metric $\gamma^*$ uniformly equivalent to $F^*$. For a complex scalar field $\psi : M \to \mathbb{C}$ and a real one-form $A \in \Omega^1(M)$, we recall the Finslerian Ginzburg--Landau functional
\begin{align}
G_F[\psi,A] = \int_M \left( \frac{1}{2}\|D_A\psi\|_{F^*}^2 + \frac{1}{2\lambda}\|dA\|_{\gamma^*}^2 + \frac{1}{4\varepsilon^2}(1-|\psi|^2)^2 \right) d\mu_F,
\label{GF-definition}
\end{align}
where $D_A\psi = (d - iA)\psi$ and $\varepsilon>0$ is the characteristic Ginzburg--Landau parameter. The natural gauge invariance $(\psi,A)\mapsto (e^{i\chi}\psi,A+d\chi)$ remains valid in this setting due to the rotational invariance of the Finsler co-norm $\|\cdot\|_{F^*}$.

As established in \cite{AlipourFakhri2025}, minimizers $(\psi_\varepsilon,A_\varepsilon)$ of \eqref{GF-definition} satisfy uniform energy bounds of order $|\log \varepsilon|$. In the limit $\varepsilon \to 0$, the rescaled energies $\frac{1}{|\log \varepsilon|}G_F[\psi_\varepsilon,A_\varepsilon]$ $\Gamma$--converge to a geometric functional depending only on the limiting vortex current $J$, given by
\begin{align}
G_0[J] = \pi \int_{\Sigma_J} F(x,\nu_J)\, d\mathcal{H}^{n-2},
\label{Gamma-limit}
\end{align}
where $\Sigma_J$ denotes the support of the rectifiable current $J$ representing the vorticity, $\nu_J$ its Finsler unit normal, and $\mathcal{H}^{n-2}$ the $(n-2)$--dimensional Hausdorff measure on $(M,F)$. The proof of \eqref{Gamma-limit} relies on convex duality between $F$ and $F^*$, the Finsler coarea formula, and the ball-construction method adapted from Jerrard--Sandier to the anisotropic setting.

\paragraph{Analytic framework for the $\Gamma$--limit and vortex separation.}
The $\Gamma$--limit \eqref{Gamma-limit} is rigorously understood in the sense of 
$\Gamma$--convergence of the functionals $G_F^\varepsilon[\psi,A]:=|\log\varepsilon|^{-1}G_F[\psi,A]$
on the product space
\begin{align}
\mathcal{H}:=\Big\{(\psi,A)\in H^1(M;\mathbb{C})\times H^1(M;T^*M): \|A\|_{H^1}\!+\!\|\psi\|_{H^1}\!<\!\infty\Big\},
\end{align}
endowed with the weak $H^1$ topology modulo gauge equivalence
$(\psi,A)\sim(e^{i\chi}\psi,A+d\chi)$.
We denote by $J_\varepsilon:=\frac{1}{2}\mathrm{curl}_{\mu_F}(i\psi_\varepsilon,d_A\psi_\varepsilon)$
the normalized vorticity current induced by a sequence $(\psi_\varepsilon,A_\varepsilon)$ of bounded energy.

\begin{definition}[Vortex currents and neutrality]\label{def:vortex-current}
A rectifiable $(n-2)$--current $J$ with integer multiplicity and finite $F$--mass is called an \emph{admissible vortex current} if
\begin{align}
\partial J = 0, \qquad \mathcal{M}_F(J):=\int_{\Sigma_J}F(x,\nu_J)\, d\mathcal{H}^{n-2}<\infty,
\label{neutrality}
\end{align}
where $\Sigma_J$ is the support and $\nu_J$ the Finsler unit normal. 
The condition $\partial J=0$ expresses the physical neutrality of total vorticity and guarantees the solvability of the limiting potential equation.
\end{definition}

\begin{theorem}[$\Gamma$--convergence]\label{thm:GammaLimitFinsler}
Let $(M,F)$ be compact and strongly convex, and let $G_F^\varepsilon$ be defined by \eqref{GF-definition}. 
Then as $\varepsilon\to 0$,
\begin{align}
G_F^\varepsilon \;\; \xrightarrow{\Gamma}\;\; G_0[J]
=\pi\!\int_{\Sigma_J}\! F(x,\nu_J)\, d\mathcal{H}^{n-2},
\end{align}
with respect to weak convergence of vorticity currents $J_\varepsilon\stackrel{*}{\rightharpoonup}J$ in the sense of currents,
where $G_0[J]$ is given by \eqref{Gamma-limit}.
\end{theorem}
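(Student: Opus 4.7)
The plan is to prove the $\Gamma$-convergence by establishing the two standard inequalities separately: the liminf inequality along an arbitrary sequence $(\psi_\varepsilon,A_\varepsilon)$ with $J_\varepsilon\stackrel{*}{\rightharpoonup}J$, and the limsup inequality by constructing an explicit recovery sequence. First, the uniform $|\log\varepsilon|$ bound of \cite{AlipourFakhri2025} together with the Jacobian/current-compactness theorem in the Finsler coarea form gives weak-$*$ precompactness of $\{J_\varepsilon\}$ in the space of rectifiable $(n-2)$-currents of finite $F$-mass; any weak limit $J$ is closed and thus admissible in the sense of Definition~\ref{def:vortex-current}.

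For the liminf inequality, I would adapt the Jerrard--Sandier ball-growth procedure to the Finsler setting by working with Finsler geodesic balls $B_F(x_0,r)$ and exploiting the convex-duality identity $|\langle D_A\psi,v\rangle|\leq \|D_A\psi\|_{F^*}\,F(x,v)$ on tangent planes transverse to $\Sigma_J$. In a two-dimensional Finslerian slice centered at a point $x_0\in\Sigma_J$ with tangent orientation $\nu_J$, the anisotropic annular estimate should yield
\begin{align*}
\int_{B_F(x_0,r)\setminus B_F(x_0,\varepsilon)}\tfrac{1}{2}\|D_A\psi\|_{F^*}^2\,d\mu_F \;\geq\;\pi|d|\,F(x_0,\nu_J)\,\log(r/\varepsilon)+o(|\log\varepsilon|),
\end{align*}
where $d$ is the local winding multiplicity. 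Summing over a finite disjoint cover of $\Sigma_J$ by such Finsler balls, and then applying the Finsler coarea formula to integrate the two-dimensional slice energy along the codimension-two normal fibers of the tubular neighborhood of $\Sigma_J$, produces the lower bound $\liminf G_F^\varepsilon\geq \pi\int_{\Sigma_J}F(x,\nu_J)\,d\mathcal{H}^{n-2}$, which is exactly $G_0[J]$.

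For the upper bound, I would construct $(\psi_\varepsilon,A_\varepsilon)$ by placing, in each Finsler normal two-plane to $\Sigma_J$, an anisotropic planar vortex $u_\varepsilon(z)=f(|z|_{F^*}/\varepsilon)\,e^{i\theta_F(z)}$ where $f$ solves the rescaled profile ODE associated to the dual co-norm and $\theta_F$ is the dual-norm-adapted phase; gauge these slices together via a partition of unity subordinate to the tubular neighborhood of $\Sigma_J$ and set $\psi_\varepsilon\equiv 1$ in the exterior. The gauge potential $A_\varepsilon$ is taken as a Hodge-type primitive of the Finsler vorticity current $J$. A direct slice-by-slice computation, using the coarea formula and the uniform equivalence of $\gamma^*$ with $F^*$ to handle the curvature term, gives $\limsup G_F^\varepsilon\leq \pi\int_{\Sigma_J}F(x,\nu_J)\,d\mathcal{H}^{n-2}$, matching the liminf. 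The main obstacle is the anisotropic ball-growth step: the monotonicity property driving the classical Jerrard--Sandier argument must be re-proved for Finsler geodesic balls, using strong convexity of $F$ and continuity of $x\mapsto F(x,\nu)$ on the unit sphere bundle; a secondary technical issue is controlling the possibly non-Riemannian volume factors of $\mu_F$ (Busemann--Hausdorff versus Holmes--Thompson) against the kinetic co-norm, which is absorbed by the uniform equivalence hypothesis and compactness of $M$.
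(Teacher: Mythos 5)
Your proposal follows essentially the same route as the paper's proof, which is itself only a sketch citing the Jerrard--Sandier ball construction and Jacobian estimates adapted via the Finsler coarea formula, convexity of $F$ for lower semicontinuity, and recovery sequences built by phase lifting on tubular neighborhoods of $\Sigma_J$. Your version simply fills in more of the intermediate structure (the anisotropic annular lower bound with density $\pi|d|\,F(x_0,\nu_J)$, the slice-by-slice recovery construction, and the compactness of the vorticity currents), all of which is consistent with the ingredients the paper delegates to \cite{BethuelBrezisHelein1994,SandierSerfaty2007,AlipourFakhri2025}.
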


\begin{proof}
The proof adapts the Jerrard--Sandier ball construction and Jacobian estimates to the anisotropic setting using the coarea formula in the Finsler metric. 
Lower semicontinuity follows from convexity of $F$, and recovery sequences are built by phase lifting on tubular neighborhoods of $\Sigma_J$.
For details, see the isotropic scheme in \cite{BethuelBrezisHelein1994,SandierSerfaty2007}
and the anisotropic adaptation in \cite{AlipourFakhri2025}.
\end{proof}

\begin{assumption}[Separation regime for higher-order expansion]\label{ass:Separation}
For the second-order expansion leading to the renormalized energy $W_F$, we assume
that for each $\varepsilon>0$ small, the vortices $a_1,\dots,a_N$ satisfy
\begin{align}
d_F(a_i,a_j)\ge C\,\varepsilon^\alpha,\qquad i\ne j,
\end{align}
for some $\alpha\in(0,1)$ and $C>0$ independent of $\varepsilon$.
This condition ensures disjoint core regions, permitting local blow-up analysis near each vortex
and the use of a Finslerian Green representation at the next order.
\end{assumption}

To proceed beyond the $\Gamma$--limit, we consider the full asymptotic expansion of the energy of a minimizing sequence. For $n=2$, assume that $(\psi_\varepsilon,A_\varepsilon)$ has $N$ well-separated vortices located at points $a_1,\dots,a_N\in M$. Then, following the classical decomposition $\psi_\varepsilon=\rho_\varepsilon e^{i\phi_\varepsilon}$ with $\rho_\varepsilon\approx 1$ away from the vortex cores, the magnetic potential $A_\varepsilon$ satisfies in the Coulomb gauge
\begin{align}
d_{\mu_F}^\dagger (A_\varepsilon - d\phi_\varepsilon) = 0.
\label{Coulomb-gauge}
\end{align}
The phase field $\phi_\varepsilon$ is approximately harmonic with respect to the Finsler Laplacian $\Delta_{F,\mu_F} = \mathrm{div}_{\mu_F}(\nabla_F \cdot)$ away from the vortex centers, i.e.
\begin{align}
\Delta_{F,\mu_F}\phi_\varepsilon = 2\pi \sum_{i=1}^N d_i \, \delta_{a_i} + o(1),
\label{phase-equation}
\end{align}
where $d_i\in\mathbb{Z}$ are the vortex degrees. 

\paragraph{Finsler Laplacian, functional setting, and Green kernel.}
Throughout this section we assume $(M,F)$ is compact, smooth, oriented, strongly convex,
and without boundary, endowed with a smooth Finsler measure $d\mu_F$ (Busemann--Hausdorff or Holmes--Thompson).
Let $F^{*}$ denote the dual norm and set, for each $x\in M$ and $\xi\in T_x^*M$,
\begin{align}
\mathcal{L}_x(\xi):=\partial_\xi \tfrac 12 F^{*2}(x,\xi)\in T_xM,\qquad 
T_F(x,\xi):=\partial_\xi^2 \tfrac 12 F^{*2}(x,\xi),\label{def:Legendre-Hessian}
\end{align}
so that $\nabla_F u := \mathcal{L}_x(du)$ and $T_F(x,\xi)$ is symmetric positive-definite by strong convexity.

We work on the mean-zero subspace
\begin{align}
H^1_F(M)_0 := \Big\{u\in H^1(M): \int_M u\, d\mu_F=0\Big\},\label{def:H10}
\end{align}
with the quadratic form
\begin{align}
\mathcal{E}_F[u] := \int_M \tfrac 12\, F^{*2}(x,du)\, d\mu_F
= \int_M \langle \nabla_F u, du\rangle\, d\mu_F.\label{def:DirichletEnergy}
\end{align}
By uniform equivalence $F^{*}\simeq \gamma^{*}$ and compactness of $M$, the Poincar\'e inequality holds on $H^1_F(M)_0$; hence
$\mathcal{E}_F$ is continuous and coercive on $H^1_F(M)_0$.

\begin{definition}\label{def:FinslerLaplacian-weak}
The Finsler Laplacian $\Delta_{F,\mu_F}$ is the (weak) operator associated with $\mathcal{E}_F$:
for $u\in H^1_F(M)$, $\Delta_{F,\mu_F}u\in H^{-1}(M)$ is defined by
\begin{align}
\langle -\Delta_{F,\mu_F}u,\varphi\rangle 
:= \int_M \langle \nabla_F u, d\varphi\rangle\, d\mu_F
= \int_M \big\langle \mathcal{L}_x(du), d\varphi\big\rangle d\mu_F,\qquad \forall \varphi\in H^1_F(M).\label{weak:def:laplacian}
\end{align}
Equivalently, on smooth functions,
\begin{align}
\Delta_{F,\mu_F} u = \mathrm{div}_{\mu_F}\!\big(\nabla_F u\big)
= \mathrm{div}_{\mu_F}\!\big(\mathcal{L}_x(du)\big),\label{div-form}
\end{align}
where $\mathrm{div}_{\mu_F}$ is the divergence with respect to $d\mu_F$, i.e.
$\int_M \langle X,d\varphi\rangle\, d\mu_F=-\int_M (\mathrm{div}_{\mu_F}X)\,\varphi\, d\mu_F$ for all smooth $\varphi$.
\end{definition}

\begin{proposition}[Self-adjointness and invertibility on mean-zero]\label{prop:selfadjoint}
The operator $-\Delta_{F,\mu_F}:H^1_F(M)_0\to H^{-1}(M)_0$ is bounded, symmetric, and strictly positive.
Consequently, its realization on $L^2_0(M,d\mu_F)$ is self-adjoint with compact resolvent and has a spectral gap:
$0=\lambda_0<\lambda_1\le \lambda_2\le\cdots\to\infty$. In particular,
for each $f\in H^{-1}(M)$ with $\int_M f\, d\mu_F=0$, there exists a unique $u\in H^1_F(M)_0$ solving
\begin{align}
-\Delta_{F,\mu_F}u=f,\qquad \int_M u\, d\mu_F=0.\label{resolvent}
\end{align}
\end{proposition}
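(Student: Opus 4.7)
The plan is to derive all four assertions — boundedness, symmetry, strict positivity, and the spectral consequences — from the Dirichlet form $\mathcal{E}_F$ via the Lax--Milgram and Friedrichs framework. I would first package \eqref{def:DirichletEnergy} as a symmetric bilinear form
\[
a_F(u,v):=\int_M \langle \mathcal{L}_x(du),\,dv\rangle\, d\mu_F,
\]
so that $a_F(u,u)=\int_M F^{*2}(x,du)\, d\mu_F$ by Euler's identity for the $2$-homogeneous function $\tfrac{1}{2}F^{*2}(x,\cdot)$, and $a_F$ realizes $-\Delta_{F,\mu_F}$ in the weak sense of \eqref{weak:def:laplacian}. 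Symmetry of $a_F$ is inherited from the symmetry of the Hessian tensor $T_F(x,\xi)$ defined in \eqref{def:Legendre-Hessian}.

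Next I would verify the two form-theoretic inputs required for Lax--Milgram. Boundedness on $H^1_F(M)_0\times H^1_F(M)_0$ follows from the uniform equivalence $F^{*}\simeq \gamma^{*}$ combined with fiberwise Cauchy--Schwarz, while coercivity follows from the matching lower bound $F^{*2}(x,\xi)\ge c\,|\xi|_{\gamma^*}^{2}$ together with the Poincar\'e inequality
\[
\|u\|_{L^2(M,d\mu_F)}\le C_P\,\|du\|_{L^2(M,d\mu_F)},\qquad u\in H^1_F(M)_0,
\]
which is available on the compact boundaryless manifold $M$ equipped with the smooth finite measure $d\mu_F$. Applying Lax--Milgram then immediately produces, for each $f\in H^{-1}(M)_0$, a unique $u\in H^1_F(M)_0$ solving \eqref{resolvent}, with the quantitative estimate $\|u\|_{H^1}\le C\,\|f\|_{H^{-1}}$.

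To pass to the self-adjoint $L^2$-realization I would invoke the Friedrichs construction: the closed, symmetric, strictly positive form $(\mathcal{E}_F,H^1_F(M)_0)$ on $L^2_0(M,d\mu_F)$ determines a unique nonnegative self-adjoint operator whose form domain is precisely $H^1_F(M)_0$. Compactness of the resolvent is then inherited from the Rellich--Kondrachov compact embedding $H^1_F(M)_0\hookrightarrow L^2_0(M,d\mu_F)$, and the spectral theorem for compact self-adjoint operators yields the discrete spectrum $0<\lambda_1\le\lambda_2\le\cdots\to\infty$ together with an orthonormal basis of eigenfunctions; the spectral gap $\lambda_1>0$ is exactly the coercivity estimate restricted to the mean-zero subspace.

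The delicate point, and the step that I expect will require the most care, is the identification of $-\Delta_{F,\mu_F}$ as a genuinely \emph{linear} operator on $H^1_F(M)_0$. In full Finsler generality the Legendre map $\mathcal{L}_x$ is only positively $1$-homogeneous, so the expression defining $a_F(u,v)$ is not a priori bilinear in $(u,v)$. The resolution is to use the strong convexity assumption to pass from $\mathcal{L}_x(du)$ to the linear divergence-form operator associated with the symmetric Hessian tensor $T_F(x,\xi)$; once this linearization is justified and one checks that $\mathrm{div}_{\mu_F}$ provides the correct adjoint pairing against the weak formulation \eqref{weak:def:laplacian}, the rest of the argument proceeds by the standard Hilbert-space machinery outlined above.
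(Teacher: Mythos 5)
Your route is the same as the paper's --- Lax--Milgram plus Poincar\'e for invertibility, Rellich--Kondrachov for the compact resolvent, then spectral theory --- and you have correctly isolated the one step on which everything hinges: whether $-\Delta_{F,\mu_F}$ is actually a \emph{linear} operator and $a_F(u,v)=\int_M\langle\mathcal{L}_x(du),dv\rangle\,d\mu_F$ actually a \emph{bilinear} form. Unfortunately your proposed resolution of that step does not work, and this is a genuine gap. Since $\tfrac12F^{*2}(x,\cdot)$ is $2$-homogeneous but not quadratic unless $F$ is Riemannian, the Legendre map $\mathcal{L}_x=\partial_\xi\tfrac12F^{*2}$ is positively $1$-homogeneous but not additive; hence $a_F$ is linear in $v$ only, and the pairing $\langle\mathcal{L}_x(\xi),\eta\rangle$ is symmetric in $(\xi,\eta)$ \emph{if and only if} $\tfrac12F^{*2}(x,\cdot)$ is a quadratic form (symmetry forces $\mathcal{L}_x$ to be linear, and a $1$-homogeneous gradient that is linear comes from a quadratic potential). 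So in the genuinely Finslerian case $a_F$ is neither bilinear nor symmetric, and Lax--Milgram, the Friedrichs construction, and linear spectral theory are simply not applicable as you invoke them. Your suggested fix --- ``pass to the linear divergence-form operator associated with $T_F(x,\xi)$'' --- does not produce a linear operator either: $T_F(x,\xi)$ is $0$-homogeneous but genuinely $\xi$-dependent, so $\mathrm{div}_{\mu_F}\bigl(T_F(x,du)\,du\bigr)$ is quasilinear, and freezing $\xi$ at any reference covector yields a linear elliptic operator that is no longer $\Delta_{F,\mu_F}$.

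What does survive without linearity: $\mathcal{E}_F$ is convex and coercive on $H^1_F(M)_0$, so for mean-zero $f\in H^{-1}$ the equation $-\Delta_{F,\mu_F}u=f$ has a unique solution by the direct method applied to $u\mapsto\mathcal{E}_F[u]-\langle f,u\rangle$, with uniqueness coming from strict convexity (strong convexity of $F^{*2}$). That salvages the last sentence of the proposition. But ``symmetric'', ``self-adjoint with compact resolvent'', and the eigenvalue sequence $0=\lambda_0<\lambda_1\le\cdots\to\infty$ require either a nonlinear eigenvalue framework (Rayleigh quotients and Ljusternik--Schnirelmann min--max, as in the Finsler $p$-Laplacian literature, where the ``spectrum'' is defined variationally rather than operator-theoretically) or the additional hypothesis that $F^*$ is quadratic. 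You should be aware that the paper's own one-line proof has exactly the same defect --- it invokes Lax--Milgram and ``standard spectral theory'' without addressing linearity --- so you have reproduced its argument faithfully, including its flaw; the difference is that you noticed where the problem lives but then did not close it.
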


\begin{proof}
Coercivity and symmetry follow from \eqref{def:DirichletEnergy} and strong convexity of $F^{*2}$;
the Poincar\'e inequality on $H^1_F(M)_0$ yields strict positivity. Lax--Milgram gives \eqref{resolvent}.
Compactness of the embedding $H^1\hookrightarrow L^2$ on compact $M$ yields compact resolvent; standard spectral theory implies self-adjointness.
\end{proof}

\begin{theorem}[Green kernel]\label{thm:Green}
There exists a unique $G_F\in \mathcal{D}'(M\times M)$ such that for every $y\in M$,
\begin{align}
\Delta_{F,\mu_F}^x G_F(x,y)=\delta_y-\frac{1}{\mathrm{Vol}_F(M)},\qquad 
\int_M G_F(x,y)\, d\mu_F(x)=0.\label{Green:normalized}
\end{align}
Moreover, $G_F(x,y)=G_F(y,x)$, $G_F(\cdot,\cdot)$ is smooth off the diagonal, and in dimension $n=2$ one has the local expansion
\begin{align}
G_F(x,y)=-\frac{1}{2\pi}\log d_F(x,y)+H_F(x,y),\label{Green:asympt}
\end{align}
where $d_F$ is the Finsler distance and $H_F$ is smooth on $M\times M$; the dependence on the choice of $d\mu_F$ affects only $H_F$ by an additive smooth term.
\end{theorem}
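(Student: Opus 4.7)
The plan is to construct $G_F$ via the spectral machinery established in Proposition~\ref{prop:selfadjoint}. Because $-\Delta_{F,\mu_F}$ is self-adjoint on $L^2_0(M,d\mu_F)$ with compact resolvent, spectral gap $\lambda_1>0$, and smooth uniformly elliptic coefficients (by strong convexity of $F^{*2}$ and compactness of $M$), it generates a semigroup $e^{t\Delta_{F,\mu_F}}$ whose integral kernel $p_t(x,y)$ is jointly smooth for $t>0$ and obeys the standard Gaussian upper bound $p_t(x,y)\lesssim t^{-n/2}\exp(-c\,d_F(x,y)^2/t)$. I would then set
\begin{equation*}
G_F(x,y):=\int_0^\infty\!\Big(p_t(x,y)-\tfrac{1}{\mathrm{Vol}_F(M)}\Big)\,dt,
\end{equation*}
with large-$t$ integrability coming from spectral gap decay of $p_t-1/\mathrm{Vol}_F(M)$ and small-$t$ off-diagonal integrability from the Gaussian bound. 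Distributional differentiation of the identity $\partial_t p_t=\Delta_{F,\mu_F}p_t$ yields \eqref{Green:normalized}; the mean-zero normalization is inherited from $\int_M(p_t(\cdot,y)-1/\mathrm{Vol}_F(M))\,d\mu_F=0$; uniqueness follows from Proposition~\ref{prop:selfadjoint} applied to mean-zero data. The symmetry $G_F(x,y)=G_F(y,x)$ is then inherited from symmetry of $p_t$ (a direct consequence of self-adjointness), and joint smoothness of $G_F$ on $(M\times M)\setminus\mathrm{diag}$ follows from elliptic regularity applied to the equation satisfied by $G_F(\cdot,y)$ on $M\setminus\{y\}$ together with smooth parametric dependence of $p_t$ in $(x,y)$ away from the diagonal.

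The main step is the logarithmic asymptotic \eqref{Green:asympt} in dimension two, which I would obtain by a parametrix. Fix $y\in M$ and work in a smooth chart centered at $y$ that identifies $T_yM\cong\mathbb{R}^2$ so that $F(y,\cdot)$ acts as a Minkowski norm $N$. The candidate parametrix is $\Phi(x,y):=-\tfrac{1}{2\pi}\log d_F(x,y)$. The core computation is that, for the frozen-coefficient operator $\mathrm{div}(\mathcal{L}_y(d\cdot))$, the function $v\mapsto -\tfrac{1}{2\pi}\log N(v)$ is a fundamental solution: applying the anisotropic divergence theorem on the Finsler disk $\{N(\cdot)\le r\}$, the flux of $\mathcal{L}_y(d\log N)$ across its boundary is independent of $r$ by 1-homogeneity of $\mathcal{L}_y$ and equals $2\pi$ after normalizing by the Busemann--Hausdorff (resp.\ Holmes--Thompson) density, in exact analogy with $\Delta\log|x|=2\pi\delta_0$ in the Euclidean case. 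Comparing $d_F(x,y)$ with $N(x-y)$ and $F$ with $F(y,\cdot)$ to leading order in the chart converts the frozen identity into $\Delta_{F,\mu_F}^x\Phi(\cdot,y)=-\delta_y+R_y$ with $R_y$ locally bounded. Setting $H_F:=G_F-\Phi$ and solving the resulting equation (whose right-hand side is bounded and mean-zero after subtracting $1/\mathrm{Vol}_F(M)$) via Proposition~\ref{prop:selfadjoint} and elliptic regularity upgrades $H_F$ to a $C^\infty$ function on $M\times M$. The measure-dependence claim then follows immediately: $\Phi$ depends only on the distance $d_F$, so switching between Busemann--Hausdorff and Holmes--Thompson changes only the bounded remainder $R_y$ (and the mean-zero normalization) by smooth factors, which perturb $H_F$ only by a smooth additive term.

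The main obstacle I expect is the anisotropic flux normalization, i.e.\ verifying that $\mathcal{L}_y(d\log N)$ pairs against the co-normal of the Finsler unit circle to give exactly $2\pi$. This is the direction-by-direction content of the Finsler Green identity and is the only place where the constant $1/(2\pi)$ and the precise choice of Finsler measure really enter; it requires the polar duality between $F$ and $F^*$ together with the homogeneity of the Legendre transform, and it is the step that determines whether the smooth part $H_F$ is genuinely insensitive to the choice of $d\mu_F$ up to a $C^\infty$ term. Once this identity is in place, the remaining comparison between $d_F$ and its frozen version $N(\cdot-y)$ in Finsler normal coordinates and the regularity upgrade for $H_F$ are routine given the analytical framework already set up in Proposition~\ref{prop:selfadjoint}.
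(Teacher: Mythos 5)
Your proposal matches the paper's proof in all essentials --- existence and uniqueness from the invertibility established in Proposition~\ref{prop:selfadjoint}, symmetry from self-adjointness of the bilinear form, off-diagonal smoothness from elliptic regularity for the divergence-form operator, and the two-dimensional logarithmic asymptotics from a frozen-coefficient parametrix --- the only cosmetic difference being that you realize the kernel of the inverse as a heat-kernel time integral $\int_0^\infty\bigl(p_t-\mathrm{Vol}_F(M)^{-1}\bigr)\,dt$ rather than directly via Riesz representation. Your identification of the anisotropic flux normalization (that $\mathcal{L}_y(d\log N)$ integrated against the Finsler measure density over the unit indicatrix yields exactly $2\pi$) as the crux of the universal $1/(2\pi)$ coefficient is precisely the point the paper leaves implicit, and is indeed the one step where the choice between Busemann--Hausdorff and Holmes--Thompson measures genuinely enters.
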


\begin{proof}
By Proposition~\ref{prop:selfadjoint}, $-\Delta_{F,\mu_F}$ is invertible on mean-zero data; Riesz representation yields $G_F$ as its kernel.
Symmetry follows from symmetry of the bilinear form in \eqref{weak:def:laplacian}.
Elliptic regularity for divergence-form operators with smooth, uniformly elliptic coefficients (here $T_F(x,du)$) gives smoothness off the diagonal.
The $2$D asymptotics \eqref{Green:asympt} follows from the model singularity of the fundamental solution and a parametrix construction adapted to the Finsler metric,
with the logarithmic coefficient $1/(2\pi)$ universal and the regular part $H_F$ smooth.
\end{proof}

\vspace{1em}
The equation \eqref{Coulomb-gauge} represents the Finslerian Coulomb gauge condition. Here $d_{\mu_F}^\dagger$ denotes the formal adjoint of the exterior derivative with respect to the measure $d\mu_F$, defined by the integration-by-parts identity
\begin{align}
\int_M \langle d_{\mu_F}^\dagger \alpha, \varphi \rangle\, d\mu_F = \int_M \langle \alpha, d\varphi \rangle\, d\mu_F,
\label{AdjointDefinition}
\end{align}
for all smooth $\alpha\in\Omega^1(M)$ and $\varphi\in C_c^\infty(M)$. This condition ensures orthogonality between the magnetic potential $A_\varepsilon$ and the phase gradient $d\phi_\varepsilon$ in the Finsler $L^2$--structure induced by $F^*$, guaranteeing the uniqueness of the decomposition up to gauge transformations.  

The equation \eqref{phase-equation} then follows as the Euler--Lagrange condition for the phase field in the Finsler setting: the anisotropic Laplace operator $\Delta_{F,\mu_F}$ acts as the divergence of the Legendre transform of $d\phi_\varepsilon$ under $F^*$, i.e. $\nabla_F\phi_\varepsilon = \partial_\xi \tfrac 12 F^{*2}(x,d\phi_\varepsilon)$. The right-hand side of \eqref{phase-equation} identifies the singular vorticity supported on the vortex centers, with degrees $d_i$, consistent with the topological quantization of phase winding.

\vspace{1em}
Substituting \eqref{phase-equation} and \eqref{Green:normalized} into \eqref{GF-definition} yields the energy expansion
\begin{align}
G_F[\psi_\varepsilon,A_\varepsilon] = \pi N|\log \varepsilon| + W_F(a_1,\dots,a_N) + o(1),
\label{energy-expansion}
\end{align}
where the renormalized energy $W_F$ depends only on the vortex locations and the underlying Finsler structure. Explicitly,
\begin{align}
W_F(a_1,\dots,a_N) = \pi \sum_{i\neq j} d_i d_j\, G_F(a_i,a_j) + \pi \sum_{i=1}^N d_i^2\, H_F(a_i,a_i).
\label{WF-definition}
\end{align}
The first term encodes the pairwise interaction of vortices through the Finslerian Green kernel, while the second term represents the local self-energy determined by the regular part $H_F$. In the isotropic case $F(x,y)=|y|$, \eqref{WF-definition} reduces exactly to the classical renormalized energy of Bethuel--Brezis--H\'elein and Serfaty.

The anisotropic dependence of $W_F$ originates from the directional structure of $F(x,y)$: the kernel $G_F(x,y)$ and its gradient reflect the geometry of the unit co-ball $\{ \xi \in T^*_x M : F^*(x,\xi)\le 1\}$. Consequently, the equilibrium positions of vortices are governed by the Finsler metric rather than a Euclidean background, leading to directionally biased interactions. The stationarity conditions
\begin{align}
\nabla_{a_i} W_F(a_1,\dots,a_N) = 0, \quad i=1,\dots,N,
\label{equilibrium-condition}
\end{align}
describe the geometric balance of forces among vortices; their explicit form involves the Finsler gradient of the Green kernel, $\nabla_{a_i} G_F(a_i,a_j)$, which in local coordinates depends on the anisotropic tensor $g_{ij}(x,\nabla_F G_F)$.

The next section will develop the analytical properties of $W_F$, the structure of its critical points, and the geometric interpretation of vortex interactions in the anisotropic Finsler setting.

\section{Analytical Structure of the Renormalized Energy $W_F$}
\label{sec:AnalyticalStructure}

We now investigate the analytical properties of the renormalized interaction energy $W_F$ introduced in \eqref{WF-definition}. In the Finslerian framework, $W_F$ is not a Euclidean-type Coulomb energy but a nonlocal potential determined by the anisotropic geometry encoded in $F$. Its differentiability and curvature properties with respect to vortex locations $\{a_i\}_{i=1}^N$ require a precise understanding of the Green kernel $G_F(x,y)$ and its regular part $H_F(x,y)$ on $(M,F)$.

For each fixed $y\in M$, recall from \eqref{Green:normalized}--\eqref{Green:asympt} that the normalized Finsler--Green function satisfies
\begin{align}
\Delta_{F,\mu_F} G_F(\cdot,y) = \delta_y - \frac{1}{\mathrm{Vol}_F(M)}, 
\quad \int_M G_F(\cdot,y)\, d\mu_F = 0,
\label{Green-normalized}
\end{align}
where $\Delta_{F,\mu_F}$ denotes the divergence--gradient operator $\mathrm{div}_{\mu_F}(\nabla_F \cdot)$ associated with the Legendre transform of $F^*$ (cf. \cite{OhtaSturm2014}).  
Elliptic regularity for Finsler Laplacians ensures that $G_F\in C^{1,\alpha}((M\times M)\setminus\mathrm{Diag})$ and $H_F\in C^\infty(M\times M)$.  
Hence $W_F$ is a well-defined smooth function on the configuration space $\mathcal{M}_N=(M^N\setminus\mathrm{Diag})$, endowed with the product Finsler metric $F^{(N)}=\bigoplus_{i=1}^N F(x_i,\cdot)$.

\begin{lemma}\label{lem:GradientExistence}
Let $(M,F)$ be a compact, strongly convex Finsler manifold with smooth measure $d\mu_F$. Then the renormalized energy $W_F$ defined by \eqref{WF-definition} admits a well-defined Finsler gradient
\begin{align}
\nonumber
\nabla_{a_i}^{(F)} W_F(a_1,\dots,a_N)
&= \pi \sum_{j\neq i} d_i d_j\, \nabla_x^{(F)} G_F(x,y)\big|_{x=a_i,y=a_j}\\
&+ \pi d_i^2\, \nabla_x^{(F)} H_F(x,x)\big|_{x=a_i},
\label{GradientWF}
\end{align}
where $\nabla_x^{(F)}$ denotes the Finsler gradient induced by the Legendre map $\mathcal{L}_x:T_x^*M\to T_xM$ associated to $F^*$ \cite{BaoChernShen2000,Shen2001}.
\end{lemma}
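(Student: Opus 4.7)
The plan is to split the argument into an analytic differentiability step on the open configuration manifold $\mathcal{M}_N=M^N\setminus\mathrm{Diag}$ and a geometric step that passes through the Legendre transform, using Theorem~\ref{thm:Green} as the analytic backbone.

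First, I would establish that $W_F$ is classically differentiable on $\mathcal{M}_N$. By Theorem~\ref{thm:Green}, $G_F$ is smooth on $(M\times M)\setminus\mathrm{Diag}$ (in any case $C^{1,\alpha}$, which is all I need here) and the regular part $H_F$ is smooth on all of $M\times M$. For every $(a_1,\dots,a_N)\in\mathcal{M}_N$ each pair $(a_i,a_j)$ with $i\ne j$ lies off the singular diagonal of $G_F$, so each term $G_F(a_i,a_j)$ is $C^{1,\alpha}$ in its arguments and each $H_F(a_i,a_i)$ is smooth; consequently $W_F\in C^{1,\alpha}(\mathcal{M}_N)$ and possesses a classical exterior differential in every vortex variable.

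Second, I would compute $d_{a_i}W_F$ term by term. The variable $a_i$ enters the pair sum through both $(i,j)$ and $(j,i)$ for every $j\ne i$; by the symmetry $G_F(x,y)=G_F(y,x)$ asserted in Theorem~\ref{thm:Green}, the partial of $G_F(a_j,a_i)$ in its second slot equals the partial of $G_F(a_i,a_j)$ in its first slot, so the two contributions combine into a single expression proportional to $d_xG_F(x,a_j)|_{x=a_i}$. For the self-energy term I would invoke the chain rule $d_{a_i}H_F(a_i,a_i)=(d_x+d_y)H_F(x,y)|_{(a_i,a_i)}$ and the symmetry of $H_F$ to reduce it to $d_xH_F(x,a_i)|_{x=a_i}$. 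This identifies $d_{a_i}W_F\in T^*_{a_i}M$ as a sum of partial differentials of $G_F$ and $H_F$, with coefficients exactly matching those in \eqref{GradientWF}.

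Finally, I would apply the Legendre map $\mathcal{L}_{a_i}$ introduced in \eqref{def:Legendre-Hessian} to $d_{a_i}W_F$; by strong convexity of $F^{*2}$ this is a smooth diffeomorphism $T^*_{a_i}M\to T_{a_i}M$, so $\nabla^{(F)}_{a_i}W_F:=\mathcal{L}_{a_i}(d_{a_i}W_F)$ is a well-defined tangent vector depending smoothly on the configuration, and identifying the individual summands on the covector side with $\nabla^{(F)}_xG_F(x,y)|_{x=a_i,y=a_j}$ and $\nabla^{(F)}_xH_F(x,x)|_{x=a_i}$ recovers \eqref{GradientWF}. The main obstacle I anticipate is not regularity, which is routine from Theorem~\ref{thm:Green}, but the correct interpretation of the right-hand side of \eqref{GradientWF}: because $\mathcal{L}_{a_i}$ is $1$-homogeneous but not additive in its covector argument, the formula must be read as the Legendre image of the summed $1$-form rather than as a pointwise sum of separately transformed vectors. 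The cleanest way to handle this is to perform all summations on the covector side and apply $\mathcal{L}_{a_i}$ exactly once at the end, which gives \eqref{GradientWF} the intended global geometric meaning.
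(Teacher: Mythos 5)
Your proposal is correct and follows essentially the same route as the paper's proof: regularity of $G_F$ off the diagonal and of $H_F$ on all of $M\times M$ (Theorem~\ref{thm:Green}) gives classical differentiability of $W_F$ on $\mathcal{M}_N$; the symmetry $G_F(x,y)=G_F(y,x)$ collects the $(i,j)$ and $(j,i)$ contributions of the pair sum; the chain rule handles the diagonal term $H_F(a_i,a_i)$; and the Legendre map converts $d_{a_i}W_F$ into a tangent vector. The one point where you go beyond the paper is your closing remark, and it is a substantive one: since $\mathcal{L}_{a_i}=\partial_\xi\tfrac12 F^{*2}(a_i,\cdot)$ is positively $1$-homogeneous but not additive unless $F$ is Riemannian, the right-hand side of \eqref{GradientWF} cannot be read as a sum of separately Legendre-transformed vectors, and even the scalar factors $\pi d_i d_j$ (which may be negative) do not commute with $\mathcal{L}_{a_i}$. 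The paper's proof is silent on this; your reading --- perform all sums and scalar multiplications on the covector side and apply $\mathcal{L}_{a_i}$ exactly once to the full differential $d_{a_i}W_F$ --- is the interpretation under which the stated identity is actually true, so you should keep that caveat explicit rather than treat it as an aside.
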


\begin{proof}
By definition, $\nabla_x^{(F)}u=\partial_\xi \frac{1}{2}F^{*2}(x,du)$ for any $u\in C^1(M)$.
Since $G_F(x,y)$ is smooth for $x\neq y$ and symmetric, the field $\nabla_x^{(F)} G_F(x,y)$ is continuous on $(M\times M)\setminus\mathrm{Diag}$, while $H_F(x,y)$ is smooth even across the diagonal.  
Therefore, $\nabla_x^{(F)}H_F(x,x)$ exists and depends smoothly on $x$.  
Differentiating $W_F$ under variations of $a_i$ and using the symmetry $G_F(x,y)=G_F(y,x)$ together with the self-adjointness of $\Delta_{F,\mu_F}$ with respect to $d\mu_F$ \cite{OhtaSturm2014}, we obtain the Finsler-gradient identity in \eqref{GradientWF}.  
\end{proof}

\begin{theorem}\label{thm:HessianStructure}
Assume $(M,F)$ is strongly convex and forward-complete. Then the second differential (Finslerian Hessian) of $W_F$ on $\mathcal{M}_N$ is given by
\begin{align}
\nonumber
\mathrm{Hess}_{a_i a_j}^{(F)} W_F
&= \pi d_i d_j\, \nabla_x^{(F)}\!\otimes\nabla_y^{(F)} G_F(x,y)\big|_{x=a_i,y=a_j}\\
&+ \pi d_i^2\, \nabla_x^{(F)}\!\otimes\nabla_x^{(F)} H_F(x,x)\big|_{x=a_i}.
\label{HessianWF}
\end{align}
Moreover, $\mathrm{Hess}_{a_i a_j}^{(F)} W_F$ is symmetric in $(i,j)$ and negative semidefinite on the subspace $\sum_i d_i v_i=0$ of tangent variations.
\end{theorem}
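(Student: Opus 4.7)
The plan is to derive the Hessian formula by differentiating the gradient expression from Lemma~\ref{lem:GradientExistence} one more time, deduce symmetry from the symmetry of the Green kernel, and then establish the sign property by representing the associated quadratic form as a regularized Finsler Dirichlet integral in which the neutrality condition annihilates the singular residues. For the formula, since $G_F(x,y)$ is smooth off the diagonal by Theorem~\ref{thm:Green} and $H_F(x,y)$ is smooth on all of $M\times M$, the gradient expression \eqref{GradientWF} is $C^1$ on $\mathcal{M}_N$ and may be differentiated term by term. For $i\neq j$ the only surviving piece is the pairwise interaction $\pi d_id_j G_F(a_i,a_j)$, whose mixed second derivative yields $\pi d_id_j\,\nabla_x^{(F)}\!\otimes\!\nabla_y^{(F)}G_F(x,y)|_{(a_i,a_j)}$; for $i=j$ the differentiation produces $\pi d_i^2\,\nabla_x^{(F)}\!\otimes\!\nabla_x^{(F)}H_F(x,x)|_{x=a_i}$, with the singular same-point contribution of the pairwise sum already absorbed into the renormalization $H_F$. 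Symmetry in $(i,j)$ is then immediate from $G_F(x,y)=G_F(y,x)$ (Theorem~\ref{thm:Green}): it forces $\nabla_x^{(F)}\!\otimes\!\nabla_y^{(F)}G_F(a_i,a_j)$ to be the transpose of the same tensor at $(a_j,a_i)$, while the diagonal block is Schwarz-symmetric as the second Finsler derivative of the smooth function $x\mapsto H_F(x,x)$.

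For the sign on the neutral subspace $\{\sum_i d_iv_i=0\}$, I would identify the quadratic form
\[
Q(\mathbf v):=\sum_{i,j}\langle v_i,\mathrm{Hess}^{(F)}_{a_ia_j}W_F\,v_j\rangle
\]
with the second variation, at the vortex configuration $(a_1,\dots,a_N)$, of the regularized anisotropic Dirichlet energy of the Finsler potential $\Phi_{\mathbf a}(x):=2\pi\sum_i d_i G_F(x,a_i)$ arising from \eqref{phase-equation}. Its first-order variation under $a_i\mapsto a_i+sv_i$ is the quasi-dipole field
\[
\xi_{\mathbf v}(x)=2\pi\sum_i d_i\,\langle v_i,\nabla_{a_i}G_F(x,a_i)\rangle.
\]
Using the self-adjointness of $-\Delta_{F,\mu_F}$ from Proposition~\ref{prop:selfadjoint} together with Green's formula, integration by parts gives
\[
Q(\mathbf v)=-\int_M T_F(x,d\Phi_{\mathbf a})(d\xi_{\mathbf v},d\xi_{\mathbf v})\,d\mu_F+\sum_i\mathcal R_{a_i}(\mathbf v),
\]
where $T_F$ is the positive-definite Legendre--Hessian from \eqref{def:Legendre-Hessian} and $\mathcal R_{a_i}$ collects the boundary residues from excised vortex cores. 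The neutrality constraint $\sum_i d_iv_i=0$ makes the aggregate dipole contribution of $\xi_{\mathbf v}$ at the cores cancel, so the residues vanish in the core-shrinking limit and $Q(\mathbf v)\le 0$ then follows from the positive-definiteness of $T_F$.

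The main obstacle will be this residue analysis. The Finslerian expansion $G_F=-\tfrac{1}{2\pi}\log d_F+H_F$ involves the anisotropic distance $d_F$, so the classical Pohozaev-type identities used in the isotropic Bethuel--Brezis--H\'elein analysis do not apply verbatim. I would resolve this by blowing up in Finsler normal coordinates at each core, replacing $d_F$ by its osculating Riemannian metric $g_{a_i}(\cdot,\cdot):=T_F(a_i,\theta_{a_i})(\cdot,\cdot)$ to leading order, and then invoking an anisotropic analogue of the dipole-cancellation lemma to show that the neutrality condition exactly kills the $O(\log\eta)$ divergences as the core radii $\eta\to 0$. This step crucially uses the strong convexity assumption to guarantee uniform ellipticity of $T_F$ in a neighborhood of each vortex.
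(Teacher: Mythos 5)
Your derivation of the formula \eqref{HessianWF} and of the $(i,j)$--symmetry matches the paper's route: differentiate \eqref{GradientWF} term by term, using the off-diagonal smoothness of $G_F$ and the global smoothness of $H_F$ from Theorem~\ref{thm:Green}, and invoke $G_F(x,y)=G_F(y,x)$. Those parts are fine and essentially identical to the printed proof.

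The sign claim is where you depart from the paper, and where your plan has a genuine gap. For orientation: the paper does not perform any residue analysis; it sets $U=\sum_i d_i\Phi_i$ (note this $U$ does not depend on the variation $\mathbf v$ at all) and asserts $\delta^2W_F=\pi\int_M\langle\nabla_FU,\nabla_FU\rangle_{F^*}\,d\mu_F\ge 0$, i.e.\ it proves \emph{non}negativity and reconciles this with the stated negative semidefiniteness only by an informal remark about a ``sign reversal.'' So you are right that this step is the delicate one, and your idea of linearizing the potential in $\mathbf v$ to get $\xi_{\mathbf v}$ is the more principled starting point. But the proposed argument fails at two places. First, the minus sign in front of your bulk term $\int_M T_F(x,d\Phi_{\mathbf a})(d\xi_{\mathbf v},d\xi_{\mathbf v})\,d\mu_F$ is asserted rather than derived; differentiating a Dirichlet-type energy twice produces this term with a \emph{plus} sign (consistent with the paper's own \eqref{SecondVariation}), so the negativity you need is not supplied by the positive definiteness of $T_F$ in \eqref{def:Legendre-Hessian}. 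Second, and more seriously, the residue cancellation is false as stated. Near each core, $\xi_{\mathbf v}$ is a dipole field of the form $d_i\langle v_i,x-a_i\rangle/|x-a_i|^2$ to leading order, and the self-residue of such a field on $\partial B_\eta(a_i)$ diverges like $d_i^2|v_i|^2/\eta^2$. This contribution is quadratic in $v_i$ and localized at the well-separated point $a_i$, so the single linear constraint $\sum_i d_iv_i=0$ cannot make the sum of these residues vanish: $\sum_i d_i^2|v_i|^2=0$ forces $\mathbf v=0$. What actually happens is that the divergent residues cancel against the divergent part of the bulk integral --- that cancellation is precisely the renormalization --- leaving a finite remainder with no a priori sign; the neutrality condition instead guarantees solvability of the limiting potential problem (mean-zero compatibility in Proposition~\ref{prop:selfadjoint}), not the vanishing of core residues. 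To settle the sign one would have to analyze that finite remainder directly, e.g.\ via the explicit second derivatives of $-\tfrac1{2\pi}\log d_F+H_F$ from \eqref{Green:asympt}; note that the logarithmic interaction Hessian is trace-free away from the diagonal, which already suggests that no argument resting solely on ellipticity of $T_F$ can yield semidefiniteness on the neutral subspace.
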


\begin{proof}
Differentiating \eqref{GradientWF} with respect to $a_j$ gives the stated expression for $\mathrm{Hess}_{a_i a_j}^{(F)} W_F$.  
The mixed symmetry follows from $\nabla_x^{(F)}\otimes\nabla_y^{(F)} G_F(x,y)=\nabla_y^{(F)}\otimes\nabla_x^{(F)} G_F(y,x)$, which holds since $G_F$ is the kernel of a self-adjoint operator \cite{OhtaSturm2014}.  
To analyze the sign, consider a variation $\{v_i\}$ with $\sum_i d_i v_i=0$ and define
\begin{align}
U(x) = \sum_i d_i\, \Phi_i(x), \qquad
\Delta_{F,\mu_F}\Phi_i = \delta_{a_i} - \frac{1}{\mathrm{Vol}_F(M)}.
\end{align}
Then, the second variation of $W_F$ reads
\begin{align}
\delta^2 W_F
= \pi \int_M \langle \nabla_F U, \nabla_F U\rangle_{F^*}\, d\mu_F,
\label{SecondVariation}
\end{align}
where $\langle\cdot,\cdot\rangle_{F^*}$ denotes the inner product induced by the co-metric $F^*$ \cite{MatveevTroyanov2011}.  
The integrand is nonnegative, so $\delta^2 W_F\ge 0$ and hence $W_F$ is conditionally convex along admissible variations.  
The sign reversal in the interaction interpretation corresponds to the attractive nature of vortices of opposite degrees.  
\end{proof}

\begin{corollary}\label{cor:EquilibriumCondition}
Critical points of $W_F$ satisfy the geometric equilibrium equations
\begin{align}
\sum_{j\neq i} d_j\, \nabla_x^{(F)} G_F(x,y)\big|_{x=a_i,y=a_j} 
+ d_i\, \nabla_x^{(F)} H_F(x,x)\big|_{x=a_i} = 0,
\quad i=1,\dots,N.
\label{EquilibriumFinsler}
\end{align}
Each vortex thus experiences a resultant Finslerian interaction force that vanishes at equilibrium. 
The structure of these forces depends anisotropically on the indicatrix geometry $\{F(x,v)=1\}$, and the corresponding configurations minimize $W_F$ under topological constraint $\sum_i d_i=0$.
\end{corollary}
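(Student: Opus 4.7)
The plan is to deduce the corollary directly from Lemma~\ref{lem:GradientExistence}, with the second-order minimality assertion drawn from Theorem~\ref{thm:HessianStructure}. The algebraic derivation is short; the substance lies in the geometric reading of the equilibrium condition and in qualifying the minimization claim.

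First, I would invoke \eqref{GradientWF} componentwise. Setting the Finsler gradient of $W_F$ on the configuration space $\mathcal{M}_N$ to zero at each index $i$ produces
\[
\pi d_i \Big( \sum_{j \neq i} d_j \, \nabla_x^{(F)} G_F(x,y)\big|_{x=a_i,\,y=a_j} + d_i \, \nabla_x^{(F)} H_F(x,x)\big|_{x=a_i} \Big) = 0.
\]
Under the standing assumption that $d_i \neq 0$ (zero-degree vortices exert no force and can be discarded), division by $\pi d_i$ yields \eqref{EquilibriumFinsler}. Well-posedness of each term follows from Theorem~\ref{thm:Green}: $\nabla_x^{(F)} G_F(a_i,a_j)$ is smooth away from the diagonal, and $\nabla_x^{(F)} H_F(x,x)$ extends smoothly to all of $M$, so the sum is an element of $T_{a_i} M$.

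Next, I would interpret the equilibrium equation geometrically. Because $\nabla_x^{(F)} u = \mathcal{L}_x(du)$ with $\mathcal{L}_x$ the Legendre transform associated to $F^*$ (cf.\ \eqref{def:Legendre-Hessian}), the interaction force on the $i$-th vortex is obtained by passing the covector $d_x G_F(a_i,a_j)$ through the anisotropic duality determined by the indicatrix $\{F(x,v)=1\}$. The resulting vector is not generally parallel to the Riemannian gradient of the potential, and this discrepancy is the precise source of directionally biased vortex interaction. In the quadratic case $F(x,v)=|v|$ one recovers the Euclidean renormalized force of Bethuel--Brezis--H\'elein, providing a consistency check.

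For the minimization statement, I would appeal to Theorem~\ref{thm:HessianStructure}, which establishes conditional positive semidefiniteness of $\mathrm{Hess}^{(F)} W_F$ along variations satisfying $\sum_i d_i v_i = 0$. Combined with the neutrality constraint $\sum_i d_i = 0$, this gives local minimality of equilibrium configurations within the admissible stratum. The main obstacle I anticipate is in upgrading this to a global minimization statement: since $G_F(x,y) \to +\infty$ as $x \to y$ only when $d_i d_j > 0$, the diagonal is a one-sided barrier, and minimizing sequences can exhibit annihilation of oppositely signed pairs. A complete global theory therefore requires either restricting to fixed topological sectors or invoking a compactness argument in the separated regime of Assumption~\ref{ass:Separation}, which I would develop more carefully alongside the stability analysis of Section~\ref{sec:Stability Analysis}.
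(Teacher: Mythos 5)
Your derivation is correct and coincides with what the paper intends: the corollary is an immediate consequence of Lemma~\ref{lem:GradientExistence}, obtained by setting the Finsler gradient \eqref{GradientWF} to zero and cancelling the nonzero scalar $\pi d_i$, and the paper itself supplies no further argument. Your added caveat that the ``minimization'' clause is only a conditional, local statement (resting on the sign of the Hessian along neutral variations and on the separation regime of Assumption~\ref{ass:Separation}) is a fair and welcome qualification, since the paper asserts minimality without proof and even states the Hessian's sign inconsistently between Theorem~\ref{thm:HessianStructure} and its proof.
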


Equation \eqref{EquilibriumFinsler} generalizes the Coulomb-type equilibrium conditions known in isotropic models to the anisotropic Finsler setting.  
The tensorial form of $\nabla_x^{(F)} G_F$ couples the geometry of the unit co-ball $\{\xi\in T_x^*M: F^*(x,\xi)\le 1\}$ with vortex arrangement, producing direction-dependent forces.  
In particular, when $F$ is a Randers metric $F=\alpha+\beta$ with $\|\beta\|_\alpha\ll 1$, the \emph{symmetric} first-order correction appears in the response tensor (cf.~\eqref{RandersTensorSym}), while any \emph{antisymmetric} effect belongs to the mobility operator \eqref{MobilityOp}, not to the Hessian in \eqref{HessianWF}.  
This yields directionally biased yet geometrically consistent predictions for vortex alignment in weakly anisotropic media.

The above results are intrinsic consequences of the Finsler variational framework and extend the classical Ginzburg--Landau vortex theory to anisotropic, direction-dependent settings.  The Legendre structure of $F^*$ ensures that all derivatives of $W_F$ are geometrically well-defined and compatible with the non-Riemannian metric structure on $M$.
\section{Physical Interpretation and Vortex \\
Alignment in Anisotropic Finsler \\
Superconductivity}
\label{sec:PhysicalInterpretation}

The analytical framework established in the previous sections allows a physical interpretation of the renormalized energy $W_F$. In the Finsler Ginzburg--Landau model, the anisotropic structure of $F$ modifies the effective interaction between vortices, leading to directionally dependent forces and preferred orientations. This section develops the physical meaning of these effects and identifies how Finsler geometry encodes the anisotropic response of superconducting media.

By Corollary~\ref{cor:EquilibriumCondition}, the equilibrium condition for each vortex is given by
\begin{align}
\sum_{j\neq i} d_j \nabla_x^{(F)} G_F(x,y)\big|_{x=a_i,y=a_j}
+ d_i \nabla_x^{(F)} H_F(x,x)\big|_{x=a_i} = 0.
\label{FinslerForceBalance}
\end{align}
Physically, $\nabla_x^{(F)} G_F$ represents the Finslerian analog of the magnetic interaction force between vortices.  
The gradient is computed with respect to the co-metric $F^*$, which determines the anisotropic effective mass tensor of the condensate.  
In the Euclidean limit $F(x,v)=|v|$, equation \eqref{FinslerForceBalance} reduces to the isotropic force balance of the standard Ginzburg--Landau model.  
For a general Finsler structure $F$, however, the direction of $\nabla_x^{(F)} G_F$ depends on the local anisotropy of the energy surface $\{F(x,v)=1\}$, thereby encoding the crystalline or material anisotropy of the medium.

\begin{theorem}\label{thm:EffectiveForceTensor}
Let $(M,F)$ be a strongly convex Finsler manifold and suppose that near each vortex $a_i$, the indicatrix $\{F(x,v)=1\}$ is elliptic with principal directions $\{e_k\}$ and local anisotropy ratios $\kappa_k=F(e_k)/F(e_1)$.  
Then the effective Finslerian interaction force acting on the $i$-th vortex can be expressed as
\begin{align}
\mathbf{F}_i^{(F)} = -\pi \sum_{j\neq i} d_i d_j \, \mathbb{T}_F(a_i)\nabla_x G_F(x,y)\big|_{x=a_i,y=a_j},
\label{EffectiveForce}
\end{align}
where $\mathbb{T}_F(a_i)$ is the anisotropic response tensor given by the Hessian of $\frac{1}{2}F^{*2}(x,\xi)$ at $\xi=d\phi_\varepsilon(a_i)$ \cite{MatveevTroyanov2011}.
\end{theorem}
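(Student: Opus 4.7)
The plan is to identify the effective force on vortex $i$ as the negative Finsler gradient $\mathbf{F}_i^{(F)}:=-\nabla_{a_i}^{(F)}W_F$, then rewrite each pairwise contribution in Lemma~\ref{lem:GradientExistence} using the Legendre structure of $F^*$, and finally extract the anisotropic response tensor by linearizing the Hessian $T_F$ at the background phase direction supplied by the Coulomb gauge. Concretely, I would begin from \eqref{GradientWF} to write
\begin{align*}
\mathbf{F}_i^{(F)} = -\pi\!\sum_{j\neq i} d_i d_j\,\nabla_x^{(F)}G_F(x,y)\big|_{x=a_i,y=a_j}
-\pi d_i^2\,\nabla_x^{(F)}H_F(x,x)\big|_{x=a_i},
\end{align*}
isolating the inter-vortex interaction (the object of the statement) from the self-induced pinning term carried by $H_F$.

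Next, I would invoke the Legendre representation $\nabla_x^{(F)} u=\mathcal{L}_x(du)=\partial_\xi\tfrac 12 F^{*2}(x,du)$ from \eqref{def:Legendre-Hessian}. Since $\tfrac 12 F^{*2}(x,\cdot)$ is $2$--homogeneous, its gradient $\mathcal{L}_x$ is $1$--homogeneous and Euler's identity for $1$--homogeneous maps gives the contraction formula
\begin{align*}
\mathcal{L}_x(\xi)=T_F(x,\xi)\,\xi,\qquad T_F(x,\xi):=\partial_\xi^2\tfrac 12 F^{*2}(x,\xi),
\end{align*}
so that $\nabla_x^{(F)}G_F(a_i,a_j)=T_F\bigl(a_i,d_x G_F(a_i,a_j)\bigr)\,d_x G_F(a_i,a_j)$. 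Because $T_F$ is $0$--homogeneous in $\xi$, its value depends only on the \emph{direction} of the covector at which it is evaluated, not on its magnitude.

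The third, and main, step is to justify replacing the varying direction $d_x G_F(a_i,a_j)$ by the single background direction $\xi_0=d\phi_\varepsilon(a_i)$. Under Assumption~\ref{ass:Separation} and the Coulomb gauge \eqref{Coulomb-gauge}--\eqref{phase-equation}, the phase field admits the leading Green representation $\phi_\varepsilon=2\pi\sum_k d_k G_F(\cdot,a_k)+(\text{smooth})$, so that $d\phi_\varepsilon(a_i)$ encodes the resultant cotangent direction produced by the whole vortex configuration. Evaluating $T_F$ at this dominant direction defines $\mathbb{T}_F(a_i):=T_F(a_i,d\phi_\varepsilon(a_i))$ and allows us to pull the response tensor outside the sum, yielding \eqref{EffectiveForce}.

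The hard part will be quantifying the error introduced in this linearization: the rays $\{d_x G_F(a_i,a_j)\}_{j\neq i}$ are not a priori aligned with $d\phi_\varepsilon(a_i)$, and the $0$--homogeneity of $T_F$ only removes the magnitude dependence. I would control the remainder through a two-scale argument, exploiting (i)~strong convexity of $F^*$ to bound the modulus of continuity of $T_F(x,\cdot)$ on the projective cotangent sphere, and (ii)~Assumption~\ref{ass:Separation} to show that the angular deviations $\angle\bigl(d_x G_F(a_i,a_j),d\phi_\varepsilon(a_i)\bigr)$ produce a contribution of higher order than the dominant $|\log\varepsilon|$-normalized interaction. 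This gives \eqref{EffectiveForce} as the leading-order effective force, with the Hessian anchored intrinsically at $\xi_0$ and the directional anisotropy of the indicatrix $\{F(x,v)=1\}$ fully encoded in $\mathbb{T}_F(a_i)$.
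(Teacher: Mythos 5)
Your first two steps coincide with the paper's own argument: the paper likewise writes $\nabla_x^{(F)}G_F=\mathcal{L}_x(d_xG_F)$ and then asserts that ``linearizing $\mathcal{L}_x$'' produces $\mathbb{T}_F(a_i)\nabla_xG_F(a_i,a_j)$ with $\mathbb{T}_F=\partial_\xi^2(\tfrac12F^{*2})$, before substituting into \eqref{FinslerForceBalance}. Your use of Euler's identity to get the exact contraction $\mathcal{L}_x(\xi)=T_F(x,\xi)\,\xi$ is actually cleaner than the paper's ``linearization,'' and you correctly isolate the one step the paper silently skips: the proof evaluates the Hessian at $\xi=d_xG_F(a_i,a_j)$, which varies with $j$, while the theorem statement demands a single $j$-independent tensor anchored at $\xi_0=d\phi_\varepsilon(a_i)$. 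The paper never reconciles these two evaluation points; you at least name the discrepancy.

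The gap is in your third step, and it is a real one. You claim the angular deviations $\angle\bigl(d_xG_F(a_i,a_j),\,d\phi_\varepsilon(a_i)\bigr)$ contribute at higher order, to be controlled via Assumption~\ref{ass:Separation}. But that assumption bounds the \emph{distances} $d_F(a_i,a_j)$ from below; it says nothing about the \emph{directions} in which the other vortices sit relative to $a_i$, and for a generic configuration those directions are spread by angles of order one. Since $T_F(a_i,\cdot)$ is $0$-homogeneous, its variation over the projective cotangent sphere is $O(1)$ for any genuinely non-quadratic $F^*$ (it is constant only in the Riemannian case), so replacing $T_F(a_i,d_xG_F(a_i,a_j))$ by $T_F(a_i,\xi_0)$ incurs an $O(1)$ error on each term of the sum --- the same order as the renormalized interaction itself, which is already the $O(1)$ correction below $\pi N|\log\varepsilon|$. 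There is no small parameter here unless one is imposed (e.g.\ weak anisotropy $\|F^{*2}-\gamma^{*2}\|\ll1$, as in the Randers regime \eqref{RandersTensorSym}, where $T_F=I+O(\|\beta\|_\alpha)$ uniformly in direction and the replacement costs only $O(\|\beta\|_\alpha^2)$). So either the factorization \eqref{EffectiveForce} must be read with $\mathbb{T}_F$ depending on the pair $(i,j)$ through the direction of $d_xG_F(a_i,a_j)$, or the statement needs an additional smallness hypothesis; your two-scale remainder estimate cannot close as described. To be fair, the paper's own proof has exactly the same hole --- it simply does not acknowledge it.
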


\begin{proof}
The Finsler gradient $\nabla_x^{(F)}G_F$ is obtained by applying the Legendre transform $\mathcal{L}_x=\partial_\xi(\tfrac 12F^{*2})$ to $d_xG_F$.  
Linearizing $\mathcal{L}_x$ at $\xi=d_xG_F(a_i,a_j)$ yields
\begin{align}
\nabla_x^{(F)} G_F(x,y)\big|_{x=a_i}
= \mathbb{T}_F(a_i)\,\nabla_x G_F(a_i,a_j),
\end{align}
where $\mathbb{T}_F(a_i)$ is the symmetric positive tensor $\partial^2_\xi(\tfrac 12F^{*2})(a_i,d_xG_F)$.  
Substituting this relation into \eqref{FinslerForceBalance} gives \eqref{EffectiveForce}.  
Physically, $\mathbb{T}_F$ represents the inverse effective mass tensor governing the propagation of phase distortions in anisotropic superconductors.  
\end{proof}

The anisotropy of $F$ introduces preferred directions in the superconducting plane, which are reflected in the alignment of vortex lattices.  
In particular, for Finsler metrics of Randers type,
\begin{align}
F(x,v)=\alpha(x,v)+\beta_x(v),
\label{RandersMetric}
\end{align}
with $\|\beta\|_\alpha\ll 1$ \cite{BaoChernShen2000}, 
the dual norm expands as
\begin{align}
F^{*}(\xi) = \alpha^{*}(\xi) - \beta(\xi) + O(\|\beta\|_\alpha^2),
\end{align}
and hence
\begin{align}
\frac{1}{2}F^{*2}(\xi) 
= \frac{1}{2}|\xi|_{\alpha^{*}}^2 
- \beta(\xi)\,|\xi|_{\alpha^{*}} + O(\|\beta\|_\alpha^2).
\end{align}
Using the definition of $\mathbb{T}_F=\partial^2_\xi(\tfrac 12 F^{*2})$ from Theorem~\ref{thm:EffectiveForceTensor},
its Randers expansion is
\begin{align}
\mathbb{T}_F = I - \mathbb{S}_\beta + O(\|\beta\|_\alpha^2),
\label{RandersTensorSym}
\end{align}
where the first--order symmetric correction is
\begin{align}
\mathbb{S}_\beta[\eta,\zeta]
= \frac{\langle \beta,\eta\rangle_\alpha \langle \hat{\xi},\zeta\rangle_\alpha
+ \langle \beta,\zeta\rangle_\alpha \langle \hat{\xi},\eta\rangle_\alpha}
{|\xi|_{\alpha^{*}}}, \qquad 
\hat{\xi}=\frac{\xi}{|\xi|_{\alpha^{*}}}.
\label{SbDef}
\end{align}
Hence $\mathbb{T}_F$ remains symmetric and positive definite, as required by the convexity of $F^{*2}$.  
The antisymmetric component relevant for physical drift does not belong to $\mathbb{T}_F$ itself; 
it arises in the linearization of the Legendre map, defining the effective mobility operator
\begin{align}
\mathbb{M}_F = \mathbb{T}_F + \mathbb{A}_\beta,
\qquad \mathbb{A}_\beta^\top = -\,\mathbb{A}_\beta,
\label{MobilityOp}
\end{align}
where $\mathbb{A}_\beta$ depends on $d\beta$ and governs the transverse bias in vortex alignment.

This antisymmetric correction produces a small but measurable bias in the equilibrium configuration of vortices, breaking rotational invariance and favoring orientation along the drift direction of $\beta$.  
Such behavior has physical significance in anisotropic superconductors, where the underlying crystal lattice or Fermi surface induces non-reciprocal interactions among vortices.

\begin{theorem}\label{thm:Alignment}
Let $F$ be of Randers type as in \eqref{RandersMetric}.  
Assume $\|\beta\|_\alpha\ll 1$ and denote by $\theta_{ij}$ the angular separation between vortices $a_i$ and $a_j$ in the Riemannian metric $\alpha$.  
Then the first-order correction to the isotropic equilibrium condition is
\begin{align}
\sum_{j\neq i} d_j \Big( \nabla_\alpha G_\alpha(a_i,a_j)
- \mathbb{A}_\beta \nabla_\alpha G_\alpha(a_i,a_j)\Big)
= O(\|\beta\|_\alpha^2),
\label{AlignmentEq}
\end{align}
and the vortices align along the principal axis of $\mathbb{A}_\beta$.
\end{theorem}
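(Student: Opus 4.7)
The plan is to reduce Theorem \ref{thm:Alignment} to a first-order perturbative expansion of the Finslerian equilibrium equation \eqref{EquilibriumFinsler} in powers of $\|\beta\|_\alpha$. Three objects must be expanded in parallel: the Legendre map $\mathcal{L}_x = \partial_\xi(\tfrac12 F^{*2})$ that supplies the Finsler gradient, the Green kernel $G_F$ of $\Delta_{F,\mu_F}$, and the regular part $H_F$. At order zero each reduces to its Riemannian analogue associated with $\alpha$, denoted $\mathcal{L}_\alpha$, $G_\alpha$, $H_\alpha$, and the zeroth-order equation is exactly the isotropic equilibrium condition.

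The first step is to read off from \eqref{RandersTensorSym} and \eqref{MobilityOp} the linearization
\begin{align*}
\nabla^{(F)}_x u = \nabla_\alpha u + (\mathbb{A}_\beta - \mathbb{S}_\beta)\,\nabla_\alpha u + O(\|\beta\|_\alpha^2),
\end{align*}
valid on smooth $u$. The second step is to expand the Finsler Laplacian as $\Delta_{F,\mu_F} = \Delta_\alpha + \mathcal{P}_\beta + O(\|\beta\|_\alpha^2)$, where $\mathcal{P}_\beta$ is a first-order divergence-form operator obtained by differentiating \eqref{div-form} in $\beta$. The ansatz $G_F = G_\alpha + G^{(1)} + O(\|\beta\|_\alpha^2)$ inserted into \eqref{Green-normalized} then yields a linearized equation $\Delta_\alpha G^{(1)}(\cdot,y) = -\mathcal{P}_\beta G_\alpha(\cdot,y)$ with zero mean, which is uniquely solvable by Proposition \ref{prop:selfadjoint} applied to the Riemannian limit and yields a smooth correction off the diagonal.

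Substituting these expansions into \eqref{EquilibriumFinsler} and collecting by order, the zeroth-order equation reproduces the isotropic Riemannian equilibrium $\sum_{j\neq i} d_j \nabla_\alpha G_\alpha(a_i,a_j) + d_i \nabla_\alpha H_\alpha(a_i,a_i) = 0$. The first-order correction decomposes into: (i) $\mathbb{A}_\beta \nabla_\alpha G_\alpha$ from the antisymmetric part of the Legendre linearization, (ii) $-\mathbb{S}_\beta \nabla_\alpha G_\alpha$ from its symmetric part, and (iii) $\nabla_\alpha G^{(1)}(a_i,a_j)$ from the linearized Green correction. One then shows that the combination of (ii) and (iii), summed over $j$ against the degrees $d_j$, assembles into an $\alpha$-gradient of a globally defined scalar, which is absorbed by a first-order displacement of the equilibrium positions; only (i) survives as a genuinely anisotropic force, leaving \eqref{AlignmentEq}.

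The main obstacle is establishing this cancellation between the symmetric Legendre contribution and the first-order Green correction. It reduces to verifying that $\mathcal{P}_\beta$ decomposes into a symmetric divergence part, which recombines with $\mathbb{S}_\beta$ through integration by parts against $G_\alpha$, and an antisymmetric drift part generated by $d\beta$, which matches $\mathbb{A}_\beta$. Once this algebraic identity is in place, diagonalizing $\mathbb{A}_\beta$ in the principal frame of the indicatrix at $a_i$ shows that the perturbed equilibrium is obtained from the isotropic one by an infinitesimal rotation whose axis coincides with the principal eigendirection of $\mathbb{A}_\beta$, completing the alignment claim.
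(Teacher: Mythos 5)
Your route is genuinely more ambitious than the paper's, and you correctly identify two contributions the paper silently ignores: the paper's proof is a one-line substitution that asserts $\nabla_x^{(F)}G_F=(I-\mathbb{A}_\beta)\nabla_\alpha G_\alpha+O(\|\beta\|_\alpha^2)$ and plugs it into \eqref{FinslerForceBalance}, without acknowledging that $G_F$ itself depends on $\beta$ through the operator in \eqref{Green-normalized}, and without reconciling this with Section~4, where the linearized Legendre map contributes the \emph{symmetric} correction $-\mathbb{S}_\beta$ of \eqref{RandersTensorSym} while $\mathbb{A}_\beta$ is explicitly assigned to the mobility operator \eqref{MobilityOp}, not to the gradient. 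Your decomposition into (i) the antisymmetric Legendre term, (ii) the symmetric Legendre term, and (iii) the gradient of the first-order Green correction $G^{(1)}$ is the right bookkeeping, and your linearized equation $\Delta_\alpha G^{(1)}(\cdot,y)=-\mathcal{P}_\beta G_\alpha(\cdot,y)$ is the correct way to produce (iii).

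The gap is the step you yourself flag as ``the main obstacle'': the claimed cancellation of (ii) against (iii) is asserted, not proved, and even if it held in the form you describe it would not yield \eqref{AlignmentEq}. Showing that $\sum_{j}d_j\big(\nabla_\alpha G^{(1)}(a_i,a_j)-\mathbb{S}_\beta\nabla_\alpha G_\alpha(a_i,a_j)\big)$ is the $\alpha$-gradient of a global scalar requires the explicit form of $\mathcal{P}_\beta$ and an integration-by-parts identity against $G_\alpha$ that you never exhibit; nothing in the structure of $\mathbb{S}_\beta$ in \eqref{SbDef} (which depends nonlinearly on the direction $\hat\xi=\widehat{d_xG}$, hence differently at each pair $(a_i,a_j)$) makes this automatic. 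More importantly, ``absorbing'' a gradient term by a first-order displacement of the vortex positions proves a statement about the location of the \emph{perturbed} critical points of $W_F$, whereas \eqref{AlignmentEq} is an identity asserted at the \emph{given} configuration $(a_1,\dots,a_N)$; after repositioning, the surviving term (i) is evaluated at the shifted points, so the displayed equation at the original points does not follow. Two smaller issues: your expansion carries $+\mathbb{A}_\beta$ where the statement and the paper's proof carry $-\mathbb{A}_\beta$, so your sign convention must be fixed against \eqref{AlignmentEq}; and, like the paper, you drop the first-order variation of the self-interaction term $d_i\nabla_x^{(F)}H_F(x,x)$ appearing in \eqref{EquilibriumFinsler}, which contributes at the same order and must either cancel or be shown to be a gradient as well. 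To close the argument you must prove the cancellation identity explicitly and restate the conclusion as one about perturbed equilibria, or else retreat to the paper's (unjustified) assertion that the entire first-order effect of $\beta$ on $\nabla_x^{(F)}G_F$ is the single skew term $-\mathbb{A}_\beta\nabla_\alpha G_\alpha$.
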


\begin{proof}
Expanding $\nabla_x^{(F)}G_F$ to first order in $\beta$ gives
\begin{align}
\nabla_x^{(F)}G_F = (I-\mathbb{A}_\beta)\nabla_\alpha G_\alpha + O(\|\beta\|_\alpha^2),
\end{align}
where $\mathbb{A}_\beta=\tfrac 12(\nabla\beta-\nabla\beta^\top)$ acts as a skew-symmetric perturbation of the identity.  
Substituting this expansion into the equilibrium condition \eqref{FinslerForceBalance} yields \eqref{AlignmentEq}.  
The antisymmetric component introduces a rotational bias, minimizing $W_F$ when vortices are aligned along the direction of $\mathbb{A}_\beta$.  
This demonstrates that anisotropic corrections of Randers type naturally produce preferred orientations in vortex configurations.  
\end{proof}
\section{Stability Analysis and Second-Order \\
Structure of the Renormalized Energy}\label{sec:Stability Analysis}

The stability of vortex configurations in anisotropic superconductors is governed by the second-order structure of the renormalized energy $W_F$. Within the Finsler Ginzburg--Landau framework, the anisotropic norm $F$ encodes the direction-dependent phase stiffness of the condensate, and the curvature of $W_F$ near equilibrium points determines the small oscillations and response of the vortex lattice. The second variation of $W_F$ thus represents the effective quadratic potential describing collective vortex dynamics around a stable configuration.

Let $\mathbf{a}=(a_1,\dots,a_N)$ denote a stationary configuration of vortices satisfying the equilibrium conditions \eqref{EquilibriumFinsler}. For small displacements $\mathbf{v}=(v_1,\dots,v_N)$, the second variation of $W_F$ reads
\begin{align}
\delta^2_{\mathbf{v}} W_F(\mathbf{a}) 
= \sum_{i,j=1}^N \big\langle \mathrm{Hess}_{a_i a_j}^{(F)}W_F(\mathbf{a})\, v_j, v_i \big\rangle_{F^*(a_i)},
\label{SecondVariationWF}
\end{align}
where $\mathrm{Hess}^{(F)}W_F$ is the Finslerian Hessian of the renormalized energy and $\langle\cdot,\cdot\rangle_{F^*}$ is the scalar product associated with the co-metric $F^*$.  
In physical terms, the quadratic form $\delta^2_{\mathbf{v}} W_F$ represents the linear response of the superconducting system to infinitesimal vortex displacements and characterizes the stiffness of the effective potential landscape.  

\begin{theorem}\label{thm:SpectralDecomposition}
Let $(M,F)$ be a compact, strongly convex Finsler manifold, and let $\mathbf{a}$ be an equilibrium configuration of $W_F$. Then the Finslerian Hessian $\mathrm{Hess}^{(F)} W_F$ admits a spectral decomposition
\begin{align}
\mathrm{Hess}^{(F)} W_F = \sum_{k=1}^{nN} \lambda_k\, \Pi_k^{(F)},
\label{SpectralDecomposition}
\end{align}
where $\{\lambda_k\}$ are the Finslerian eigenvalues and $\{\Pi_k^{(F)}\}$ the orthogonal projectors with respect to the co-metric inner product $\langle \cdot,\cdot\rangle_{F^*}$.  
The quantities $\lambda_k$ correspond to the squared frequencies of small-amplitude collective vortex oscillations.  
The configuration $\mathbf{a}$ is linearly stable if and only if $\lambda_k\ge 0$ for all $k$.
\end{theorem}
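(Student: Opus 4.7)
The plan is to view the Finslerian Hessian as a self-adjoint operator on the finite-dimensional inner product space $T_{\mathbf{a}}\mathcal{M}_N=\bigoplus_{i=1}^N T_{a_i}M$ and then apply the spectral theorem, with the spectral data subsequently identified as the linearized dynamics of the Finsler gradient flow near $\mathbf{a}$.

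First, I would endow $T_{\mathbf{a}}\mathcal{M}_N$ with the product inner product
\begin{align*}
\langle \mathbf{v},\mathbf{w}\rangle_{F^{*}} := \sum_{i=1}^N T_F(a_i,\xi_i^\star)(v_i,w_i),
\end{align*}
where $\xi_i^\star$ is the reference covector at $a_i$ (obtained from the vortex phase gradient in the stationary state, cf.~\eqref{def:Legendre-Hessian}) and $T_F$ is the positive-definite fundamental tensor from \eqref{def:Legendre-Hessian}. Strong convexity of $F^{*2}$ guarantees that this is a genuine Euclidean structure on a space of dimension $nN$. Next I would promote the bilinear form $\delta^2_{\mathbf{v}} W_F$ in \eqref{SecondVariationWF} to a linear endomorphism $\mathcal{H}_F:T_{\mathbf{a}}\mathcal{M}_N\to T_{\mathbf{a}}\mathcal{M}_N$ via the musical isomorphism induced by $T_F$, so that $\langle \mathcal{H}_F\mathbf{v},\mathbf{w}\rangle_{F^{*}}=\delta^2_{\mathbf{v},\mathbf{w}}W_F(\mathbf{a})$.

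The self-adjointness of $\mathcal{H}_F$ is the analytic heart of the statement. Here I would combine three ingredients already available in the paper: the block symmetry $\mathrm{Hess}_{a_i a_j}^{(F)}W_F=(\mathrm{Hess}_{a_j a_i}^{(F)}W_F)^{\top}$ established in Theorem~\ref{thm:HessianStructure} via the symmetry $G_F(x,y)=G_F(y,x)$; the symmetry of the fundamental tensor $T_F$ (hence of the Legendre map $\mathcal{L}_x$ at the linearized level); and the self-adjointness of $-\Delta_{F,\mu_F}$ on $H^1_F(M)_0$ from Proposition~\ref{prop:selfadjoint}, which is what guarantees that mixed derivatives of the Green kernel commute symmetrically across diagonal and off-diagonal blocks. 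Once self-adjointness of $\mathcal{H}_F$ with respect to $\langle\cdot,\cdot\rangle_{F^{*}}$ is established, the finite-dimensional spectral theorem immediately yields an orthonormal eigenbasis, real eigenvalues $\lambda_1\le\cdots\le\lambda_{nN}$ and orthogonal projectors $\Pi_k^{(F)}$ satisfying \eqref{SpectralDecomposition}.

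To justify the physical interpretation of $\lambda_k$ as squared oscillation frequencies I would linearize the vortex gradient flow $\dot{a}_i=-\mathbb{T}_F(a_i)\nabla_{a_i}^{(F)}W_F$ (whose full version appears in Section~\ref{sec:AsymptoticDynamics}) around $\mathbf{a}$; since $\nabla^{(F)}W_F(\mathbf{a})=0$ by equilibrium, the linearized equation becomes $\dot{\mathbf{v}}=-\mathcal{H}_F\mathbf{v}$, whose normal-mode frequencies are precisely the eigenvalues of $\mathcal{H}_F$. The stability criterion then follows: the flow contracts a given mode if and only if the corresponding $\lambda_k\ge 0$, and linear stability of $\mathbf{a}$ is equivalent to non-negativity of the entire spectrum, i.e.\ to positive semidefiniteness of $\delta^2 W_F$.

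I expect the main obstacle to lie in the self-adjointness step, and specifically in a canonical choice of the reference direction $\xi_i^\star$ at which the fundamental tensor $T_F(a_i,\cdot)$ is frozen. In the isotropic case this question is invisible because $T_F$ reduces to the Euclidean identity, but in the Finsler setting different natural choices (phase gradient, gradient of the local Green representation, osculating Riemannian metric of Matveev--Troyanov) must be reconciled. I would resolve this by showing that the eigenvalues and stability criterion are invariant under the change of reference direction, using the fact that distinct admissible choices of $T_F(a_i,\xi_i^\star)$ differ by a positive-definite conjugation and hence preserve signs of eigenvalues via Sylvester's law of inertia, even if they rescale the numerical values.
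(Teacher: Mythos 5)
Your proposal is correct and follows essentially the same route as the paper: both reduce the statement to the finite-dimensional spectral theorem for a symmetric operator on $T_{\mathbf{a}}\mathcal{M}_N$ equipped with the positive-definite co-metric inner product, and then read off linear stability from the sign of the spectrum. Your write-up is in fact more careful than the paper's own proof on two points it leaves implicit --- the justification of self-adjointness via the block symmetry of Theorem~\ref{thm:HessianStructure}, and the dependence of the frozen fundamental tensor $T_F(a_i,\xi_i^\star)$ on the choice of reference covector, which you correctly neutralize via Sylvester's law of inertia.
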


\begin{proof}
Since $W_F$ is twice continuously differentiable on $\mathcal{M}_N$, the operator $\mathrm{Hess}^{(F)}W_F$ is symmetric under $\langle\cdot,\cdot\rangle_{F^*}$.  
Strong convexity of $F$ guarantees a positive-definite co-metric, ensuring the existence of an orthonormal basis of eigenmodes $\{v_k\}$ such that
\begin{align}
\mathrm{Hess}^{(F)}W_F[v_k] = \lambda_k v_k.
\end{align}
Expanding any perturbation $\mathbf{v}=\sum_k c_k v_k$ yields
\begin{align}
\delta^2_{\mathbf{v}} W_F = \sum_k \lambda_k c_k^2.
\end{align}
The condition $\lambda_k\ge 0$ ensures that any small perturbation increases or leaves invariant the total energy, corresponding physically to a stable equilibrium of the vortex lattice.  
Negative eigenvalues represent unstable modes, associated with spontaneous reconfiguration or drift of vortex lines under anisotropic stress.  
\end{proof}

In anisotropic superconductors, the eigenstructure of $\mathrm{Hess}^{(F)}W_F$ reflects the directional response of the condensate.  
Each eigenvalue $\lambda_k$ describes a specific oscillation mode of the vortex system, while the corresponding eigenvector indicates the direction of motion of vortices under the anisotropic restoring force determined by $F^*$.  
The anisotropic dependence of the Finsler metric thus produces distinct stiffnesses along different crystallographic axes, breaking the degeneracy that characterizes isotropic materials.

\begin{lemma}\label{lem:SecondOrderExpansion}
Let $\mathbf{a}_\varepsilon=\mathbf{a}+\varepsilon\mathbf{v}$ denote a perturbation of a stationary vortex configuration $\mathbf{a}$.  
Then the renormalized energy expands to second order as
\begin{align}
W_F(\mathbf{a}_\varepsilon)
= W_F(\mathbf{a})
+ \frac{\varepsilon^2}{2}\,\delta^2_{\mathbf{v}} W_F(\mathbf{a}) + o(\varepsilon^2),
\label{SecondOrderExpansionWF}
\end{align}
where $\delta^2_{\mathbf{v}} W_F$ is given by \eqref{SecondVariationWF}.  
In particular, $\mathbf{a}$ is a locally stable equilibrium whenever $\delta^2_{\mathbf{v}} W_F(\mathbf{a})>0$ for all perturbations $\mathbf{v}$ satisfying the neutrality constraint $\sum_i d_i v_i=0$.
\end{lemma}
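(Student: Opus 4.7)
The plan is to obtain \eqref{SecondOrderExpansionWF} as a second-order Taylor expansion of $W_F$ around the stationary configuration $\mathbf{a}$, exploiting the $C^2$ regularity established in Section~\ref{sec:AnalyticalStructure} together with the equilibrium condition \eqref{EquilibriumFinsler} to annihilate the linear term. First I would interpret the perturbation $\mathbf{a}_\varepsilon=\mathbf{a}+\varepsilon\mathbf{v}$ intrinsically on the configuration manifold $\mathcal{M}_N$ by setting $a_{i,\varepsilon}:=\exp^F_{a_i}(\varepsilon v_i)$, where $\exp^F$ denotes the Finsler exponential; by strong convexity and forward-completeness this is well defined for $|\varepsilon|$ small, and the curve remains in $\mathcal{M}_N$ because the $a_i$ are distinct.

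Next, since Lemma~\ref{lem:GradientExistence} and Theorem~\ref{thm:HessianStructure} give $W_F\in C^2(\mathcal{M}_N)$, the scalar map $\varepsilon\mapsto W_F(\mathbf{a}_\varepsilon)$ is twice continuously differentiable near the origin, and Taylor's theorem yields
\begin{align*}
W_F(\mathbf{a}_\varepsilon)
= W_F(\mathbf{a})
+\varepsilon\sum_{i=1}^N \big\langle \nabla^{(F)}_{a_i}W_F(\mathbf{a}),v_i\big\rangle_{F^*(a_i)}
+\frac{\varepsilon^2}{2}\,\delta^2_{\mathbf{v}}W_F(\mathbf{a})+o(\varepsilon^2).
\end{align*}
The linear term vanishes since $\mathbf{a}$ satisfies \eqref{EquilibriumFinsler}, i.e.\ $\nabla^{(F)}_{a_i}W_F(\mathbf{a})=0$ for every $i$, and the quadratic coefficient is precisely the bilinear form \eqref{SecondVariationWF}. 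To conclude local stability under the hypothesis $\delta^2_{\mathbf{v}}W_F(\mathbf{a})>0$ on the neutral subspace $\mathcal{V}_0:=\{\mathbf{v}\in\bigoplus_i T_{a_i}M:\sum_i d_i v_i=0\}$, I would use finite-dimensionality: continuity of $\delta^2_{\mathbf{v}}W_F(\mathbf{a})$ and compactness of the unit sphere of $\mathcal{V}_0$ in the product co-metric furnish a uniform $\lambda_0>0$ with $\delta^2_{\mathbf{v}}W_F(\mathbf{a})\ge\lambda_0\|\mathbf{v}\|^2_{F^*}$ on $\mathcal{V}_0$. Substituting into the expansion gives $W_F(\mathbf{a}_\varepsilon)\ge W_F(\mathbf{a})+\tfrac{\varepsilon^2}{4}\lambda_0\|\mathbf{v}\|^2_{F^*}$ for all sufficiently small $\varepsilon>0$, which is the asserted strict local minimality and hence linear stability compatible with Theorem~\ref{thm:SpectralDecomposition}.

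The main obstacle is conceptual rather than analytic: one must verify that the second-order coefficient extracted from the expansion coincides with the intrinsic Finslerian Hessian of Theorem~\ref{thm:HessianStructure}, which was defined through the bilinear pairing \eqref{HessianWF} rather than through a distinguished parametrization. Two smooth curves through $\mathbf{a}$ with common initial velocity $\mathbf{v}$ can differ at order $O(\varepsilon^2)$, but at a critical point of a $C^2$ function any such discrepancy contributes only to the $o(\varepsilon^2)$ remainder of $W_F(\mathbf{a}_\varepsilon)$. Consequently the expansion and the stability criterion are geometrically intrinsic, and no estimate beyond the $C^2$ regularity and spectral properties already established is required.
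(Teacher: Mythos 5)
Your proposal is correct and follows essentially the same route as the paper's own proof: a second-order Taylor expansion of $W_F$ along the perturbation, with the linear term annihilated by the equilibrium condition \eqref{EquilibriumFinsler}. The extra care you take (interpreting $\mathbf{a}+\varepsilon\mathbf{v}$ via the Finsler exponential, noting that at a critical point the quadratic coefficient is parametrization-independent, and extracting a uniform coercivity constant on the neutral subspace by compactness) only fills in details the paper leaves implicit.
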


\begin{proof}
Expanding $W_F(\mathbf{a}_\varepsilon)$ in powers of $\varepsilon$ gives
\begin{align}
\nonumber
W_F(\mathbf{a}_\varepsilon)
&= W_F(\mathbf{a}) 
+ \varepsilon\,\sum_i \langle\nabla_{a_i}^{(F)}W_F, v_i\rangle_{F^*(a_i)}\\ 
&+ \frac{\varepsilon^2}{2}\sum_{i,j}\big\langle \mathrm{Hess}_{a_i a_j}^{(F)}W_F\,v_j,v_i\big\rangle_{F^*(a_i)} + o(\varepsilon^2).
\end{align}
The equilibrium condition $\nabla_{a_i}^{(F)} W_F(\mathbf{a})=0$ cancels the linear term, leaving the quadratic form \eqref{SecondOrderExpansionWF}.  
Positivity of $\delta^2_{\mathbf{v}} W_F$ ensures that any displacement of the vortex configuration raises the energy, corresponding to harmonic confinement of vortices near their equilibrium positions.  
\end{proof}

From a physical standpoint, $\delta^2_{\mathbf{v}} W_F$ defines the effective elastic potential describing collective oscillations of the vortex lattice.  
Its tensorial structure encodes how different spatial directions respond to small vortex displacements, and the coefficients act as effective elastic moduli for the superconducting condensate.  
This leads naturally to the concept of a Finslerian elasticity tensor, governing the anisotropic mechanical response of the order parameter field.

\begin{theorem}\label{thm:EffectiveElasticity}
Let $\mathbf{a}$ be a stable vortex configuration of $W_F$.  
Then the quadratic energy $\frac{1}{2}\delta^2_{\mathbf{v}}W_F$ can be represented as
\begin{align}
\frac{1}{2}\delta^2_{\mathbf{v}}W_F
= \frac{\pi}{2} \int_M 
\big\langle \mathbb{C}_F(x)\, \nabla_F U,\, \nabla_F U\big\rangle_{F^*}\, d\mu_F,
\label{ElasticityTensorForm}
\end{align}
where $U=\sum_i d_i\Phi_i$ denotes the perturbation potential induced by the displacement field, and $\mathbb{C}_F(x)$ is the Finslerian elasticity tensor defined by
\begin{align}
\mathbb{C}_F(x) = \partial^2_\xi \left(\tfrac{1}{2}F^{*2}(x,\xi)\right)\Big|_{\xi=\nabla_F U(x)}.
\label{ElasticityTensor}
\end{align}
Physically, $\mathbb{C}_F$ describes the anisotropic stiffness of the condensate, linking variations of the phase field to the induced supercurrent response.
\end{theorem}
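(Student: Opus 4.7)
The plan is to derive the elasticity representation \eqref{ElasticityTensorForm} from the second-variation identity already established in Theorem~\ref{thm:HessianStructure}, using Euler's identity for $2$-homogeneous functions to promote the scalar Dirichlet energy into a tensorial form in which $\mathbb{C}_F$ appears explicitly. The starting point is equation \eqref{SecondVariation} which, combined with the Dirichlet identity \eqref{def:DirichletEnergy}, gives
\begin{equation*}
\delta^2_{\mathbf{v}} W_F(\mathbf{a}) \;=\; \pi \int_M F^{*2}(x, dU)\, d\mu_F,
\end{equation*}
where $U = \sum_i d_i \Phi_i$ is the perturbation potential attached to the admissible displacement $\mathbf{v}$.

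The decisive step is a pointwise Euler-type identity. Setting $h(x,\xi) := \tfrac{1}{2}F^{*2}(x,\xi)$, which is positively $2$-homogeneous and strongly convex in $\xi$, one has $\xi \cdot \partial_\xi h = 2h$; differentiating once more in $\xi$ produces $(\partial^2_\xi h)\,\xi = \partial_\xi h$, so that $F^{*2}(x,\xi) = \langle \partial^2_\xi h(x,\xi)\,\xi,\, \xi\rangle$ pointwise. Evaluated at $\xi = dU(x)$, the Hessian $\partial^2_\xi h$ is precisely $\mathbb{C}_F(x)$ from \eqref{ElasticityTensor}, and the Legendre relation $\nabla_F U = \mathcal{L}_x(dU) = \partial_\xi h(dU)$ recasts the resulting quadratic form in terms of the tangent-vector field $\nabla_F U$; collecting constants then yields \eqref{ElasticityTensorForm}. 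Symmetry and positivity of $\mathbb{C}_F$ are inherited from the strong convexity encoded in $T_F$ (cf.~\eqref{def:Legendre-Hessian}), while the neutrality constraint $\sum_i d_i v_i = 0$ places $U$ in the mean-zero space $H^1_F(M)_0$, so that Proposition~\ref{prop:selfadjoint} guarantees the corresponding boundary-value problem is well-posed.

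The main obstacle is that each $\Phi_i$ carries a logarithmic singularity at $a_i$ by \eqref{Green:asympt}, so the integral in \eqref{ElasticityTensorForm} is formally divergent. The remedy, parallel to the passage from \eqref{GF-definition} to \eqref{energy-expansion}, is a core-cutoff argument: excise Finsler balls $B_{F,\rho}(a_i)$ of radius $\rho$, compute the second variation of the truncated Dirichlet energy, decompose $\Phi_i$ into its singular part $-\tfrac{1}{2\pi}\log d_F(\cdot,a_i)$ and smooth regular part $H_F(\cdot,a_i)$, and observe that the $\log\rho$ contributions cancel against those already absorbed in the construction of $W_F$. The finite remainder then agrees both with the Hessian quadratic form of \eqref{HessianWF} and with the right-hand side of \eqref{ElasticityTensorForm}. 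The delicate point is verifying that this cutoff-renormalization procedure commutes with the second differentiation uniformly in the admissible displacement $\mathbf{v}$, which relies on Assumption~\ref{ass:Separation} to keep the excised balls disjoint and on the $C^{1,\alpha}$ regularity of $G_F$ off the diagonal.
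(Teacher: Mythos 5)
Your proposal follows essentially the same route as the paper's proof: both start from the second-variation identity \eqref{SecondVariation} and rewrite the integrand in terms of the Hessian $\partial^2_\xi(\tfrac12 F^{*2})$. The paper's own argument is a two-line formal substitution (``linearizing $F^{*2}$ with respect to $\xi$ \dots\ substituting into the energy integral''); you make the decisive step explicit via Euler's identity for the positively $2$-homogeneous function $h=\tfrac12 F^{*2}$, namely $(\partial^2_\xi h)\,\xi=\partial_\xi h$ and hence $F^{*2}(x,\xi)=\langle (\partial^2_\xi h)\,\xi,\xi\rangle$, which is exactly the pointwise identity the paper leaves implicit. You also supply something the paper omits entirely: since each $\Phi_i$ has the logarithmic singularity \eqref{Green:asympt}, the integral $\int_M F^{*2}(x,dU)\,d\mu_F$ diverges at the vortex cores, so \eqref{ElasticityTensorForm} only makes sense after a core cutoff in which the $\log\rho$ contributions are cancelled against those already absorbed into $W_F$; your excision argument under Assumption~\ref{ass:Separation} is the standard and correct remedy for a real gap in the paper's presentation. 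One caveat to record: after the Legendre substitution the quadratic form expressed in the \emph{vector} field $\nabla_F U$ naturally carries $T_F^{-1}=\mathbb{C}_F^{-1}$ rather than $\mathbb{C}_F$, because $\langle T_F\,dU,dU\rangle=\langle T_F^{-1}\nabla_F U,\nabla_F U\rangle$; so the pairing $\langle\mathbb{C}_F\nabla_F U,\nabla_F U\rangle_{F^*}$ in \eqref{ElasticityTensorForm} must be read as absorbing the appropriate inverse factors. This type mismatch is inherited from the statement itself and is equally present in the paper's proof, so it does not count against your reconstruction, but it is worth flagging if you want the identity to be literally checkable in coordinates.
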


\begin{proof}
Starting from the definition of the Finsler gradient, we recall that variations in $U$ modify the dual variable $\xi=\nabla_F U$.  
Linearizing $F^{*2}$ with respect to $\xi$ leads to the symmetric tensor $\mathbb{C}_F(x)=\partial_\xi^2(\tfrac 12F^{*2})$, which represents the local relation between changes in the phase gradient and the corresponding variation in kinetic energy density.  
Substituting this linearization into the energy integral for $\delta^2 W_F$ gives \eqref{ElasticityTensorForm}.  
Hence $\mathbb{C}_F$ plays the role of an effective elastic modulus of the superconducting medium, generalizing the isotropic stiffness to anisotropic, direction-dependent settings.  
\end{proof}

The strong convexity of $F^*$ ensures that $\mathbb{C}_F$ is positive definite, implying $\delta^2_{\mathbf{v}}W_F>0$ for all admissible perturbations around stable equilibria.  
In this sense, the Finsler metric $F$ directly controls the anisotropic rigidity of the vortex lattice.  
When $F$ is of Randers type $F=\alpha+\beta$ with $\|\beta\|_\alpha\ll 1$, expanding \eqref{ElasticityTensor} yields
\begin{align}
\mathbb{C}_F = I - \mathbb{S}_\beta + O(\|\beta\|_\alpha^2),
\label{ElasticityRanders}
\end{align}
where $\mathbb{S}_\beta$ is the symmetric first-order correction defined in \eqref{SbDef}.  
Antisymmetric (Hall-type) effects arise instead through the mobility operator \eqref{MobilityOp} and do \emph{not} enter the elastic tensor.  
In anisotropic superconductors, this separation clarifies that chiral drift stems from transport (mobility) rather than from the conservative elastic response of the condensate .
\section{Asymptotic Dynamics and Collective \\
Motion of Finsler Vortices}\label{sec:AsymptoticDynamics}

The renormalized energy $W_F$ not only determines the static interaction among vortices but also governs their effective motion in the dissipative regime of anisotropic superconductors.  
In the asymptotic limit $\varepsilon \to 0$, the Finsler Ginzburg--Landau dynamics can be reduced to a finite-dimensional gradient flow system in the vortex positions.  
This provides a physical description of vortex motion as an energy relaxation process driven by the Finsler gradient of $W_F$.

Let $(\psi_\varepsilon,A_\varepsilon)$ be a solution of the time-dependent Finsler Ginzburg--Landau equations
\begin{align}
\nonumber
&\partial_t \psi_\varepsilon = - D_A^* D_A \psi_\varepsilon + \frac{1}{\varepsilon^2}\psi_\varepsilon(1-|\psi_\varepsilon|^2),\\
&\partial_t A_\varepsilon = -\frac{1}{\lambda}\, d^*_{\mu_F} dA_\varepsilon + \mathrm{Im}(\bar{\psi}_\varepsilon D_A \psi_\varepsilon),
\label{TDGL-Finsler}
\end{align}
defined on a compact Finsler manifold $(M,F)$ equipped with measure $d\mu_F$ and co-metric $F^*$.  
In the limit $\varepsilon \to 0$, assuming a finite number of vortices $\{a_1(t),\dots,a_N(t)\}$ with degrees $\{d_i\}$, the evolution of the vortex centers is asymptotically governed by the Finsler gradient flow
\begin{align}
\dot{a}_i(t) = - \nabla_{a_i}^{(F)} W_F(a_1(t),\dots,a_N(t)), \qquad i=1,\dots,N,
\label{GradientFlowFinsler}
\end{align}
where $\nabla_{a_i}^{(F)}$ denotes the gradient with respect to the Finsler metric.  
This equation describes the slow relaxation of the vortex configuration along the steepest descent of the renormalized energy,  
and represents the macroscopic manifestation of microscopic dissipative dynamics in the anisotropic condensate.

\begin{theorem}\label{thm:EnergyDissipation}
Let $(a_1(t),\dots,a_N(t))$ evolve according to the gradient flow \eqref{GradientFlowFinsler}.  
Then the renormalized energy $W_F$ satisfies the dissipation law
\begin{align}
\frac{d}{dt} W_F(a_1(t),\dots,a_N(t))
= - \sum_{i=1}^N F^{*2}\!\left(a_i(t),\dot{a}_i(t)\right) \le 0,
\label{DissipationLaw}
\end{align}
with equality if and only if $\dot{a}_i=0$ for all $i$, i.e., the configuration is stationary.  
\end{theorem}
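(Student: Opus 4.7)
The plan is to reduce the dissipation identity to the elementary Legendre duality that defines the Finsler gradient, together with Euler's identity for the degree-two homogeneous function $F^{*2}$. First I would apply the chain rule along a trajectory of \eqref{GradientFlowFinsler}:
\begin{align*}
\frac{d}{dt} W_F(a_1(t),\dots,a_N(t)) = \sum_{i=1}^N \langle d_{a_i} W_F,\, \dot a_i(t)\rangle,
\end{align*}
where the pairing is the canonical $T^*_{a_i}M$--$T_{a_i}M$ duality. Smoothness of $W_F$ on the off-diagonal configuration space $\mathcal{M}_N$, guaranteed by Lemma~\ref{lem:GradientExistence}, ensures that each $d_{a_i}W_F$ exists and is continuous as long as the centers remain distinct.

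Next I would use $\nabla^{(F)}_{a_i} W_F = \mathcal{L}_{a_i}(d_{a_i} W_F)$ from \eqref{def:Legendre-Hessian}. Because $F^{*2}(x,\cdot)$ is positively homogeneous of degree two, Euler's identity applied to $\mathcal{L}_x(\xi) = \partial_\xi \tfrac{1}{2} F^{*2}(x,\xi)$ produces the key relation $\langle \xi,\, \mathcal{L}_x(\xi)\rangle = F^{*2}(x,\xi)$. Substituting the flow equation $\dot a_i = -\nabla^{(F)}_{a_i} W_F$ into the chain-rule expression and applying this identity at $\xi = d_{a_i}W_F$ yields
\begin{align*}
\frac{d}{dt} W_F = -\sum_{i=1}^N \langle d_{a_i}W_F,\, \nabla^{(F)}_{a_i}W_F\rangle = -\sum_{i=1}^N F^{*2}(a_i, d_{a_i}W_F),
\end{align*}
which is \eqref{DissipationLaw} after invoking the Finslerian duality $F^{*}(x,du) = F(x,\nabla^{(F)}u)$ to rewrite the right-hand side in terms of the velocities $\dot a_i$. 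For the equality case, the strict convexity and positive-definiteness of $F^{*2}$ force each summand to vanish exactly when $d_{a_i}W_F = 0$, which by bijectivity of the Legendre map $\mathcal{L}_{a_i}$ is equivalent to $\dot a_i = 0$ for every $i$.

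The computation itself is algebraically short; the main technical concern is ensuring the chain rule remains applicable throughout the trajectory. The hard part will be confirming short-time existence of the smooth flow on $\mathcal{M}_N$ and recognizing that the identity is valid on any subinterval where the trajectory stays away from vortex collisions, since $W_F$ ceases to be smooth along the diagonal. In the repulsive regime ($d_i d_j>0$) the blow-up of $W_F$ at the diagonal combined with its monotone decrease along the flow prevents collisions automatically; in the attractive regime the identity holds up to the first collision time, beyond which a separate analysis is required. A minor notational point is that $F^{*2}$ appears in \eqref{DissipationLaw} evaluated on the tangent vector $\dot a_i$; this is made consistent by the Legendre equality $F^{*2}(a_i,d_{a_i}W_F) = F^2(a_i,\dot a_i)$, so both conventions designate the same nonnegative quantity.
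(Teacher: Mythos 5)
Your proof is correct and follows essentially the same route as the paper's: chain rule along the trajectory, substitution of the flow law $\dot a_i = -\nabla^{(F)}_{a_i}W_F$, and identification of the resulting sum with $-\sum_i F^{*2}$. You are in fact somewhat more careful than the paper on the tangent/cotangent bookkeeping --- using the canonical pairing together with Euler's identity for the $2$-homogeneous function $F^{*2}$, rather than the paper's informal inner product $\langle\dot a_i,\dot a_i\rangle_{F^*(a_i)}$ of tangent vectors --- and your remarks on collision avoidance and on the type mismatch in $F^{*2}(a_i,\dot a_i)$ address points the paper leaves implicit.
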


\begin{proof}
Differentiating $W_F(a_1(t),\dots,a_N(t))$ with respect to $t$ yields
\begin{align}
\frac{d}{dt} W_F = \sum_{i=1}^N \langle \nabla_{a_i}^{(F)} W_F, \dot{a}_i\rangle_{F^*(a_i)}.
\end{align}
Substituting the gradient flow law \eqref{GradientFlowFinsler}, one obtains
\begin{align}
\frac{d}{dt} W_F
= - \sum_{i=1}^N \langle \dot{a}_i, \dot{a}_i\rangle_{F^*(a_i)}
= - \sum_{i=1}^N F^{*2}(a_i,\dot{a}_i),
\end{align}
which is nonpositive since $F^{*2}>0$.  
The equality case corresponds to $\dot{a}_i=0$, i.e., a critical configuration of $W_F$.  
\end{proof}

Equation \eqref{DissipationLaw} expresses the physical principle of energy relaxation: the vortex system moves in the direction of the steepest energy decay determined by the anisotropic co-metric $F^*$.  
In isotropic media, this reduces to the well-known overdamped motion $\dot{a}_i=-\nabla_{a_i}W$, while in anisotropic superconductors the trajectories $\dot{a}_i(t)$ follow geodesics of the Finsler metric weighted by $F^{*2}$.  
Hence, the curvature and anisotropy of $F$ introduce bias and directionality into vortex motion, leading to asymmetric drift and nonuniform relaxation rates.

\begin{lemma}\label{lem:FinslerMobility}
Let $F$ be a smooth, strongly convex Finsler metric.  
Then the velocity $\dot{a}_i$ of each vortex can be expressed as
\begin{align}
\dot{a}_i = - \mathbb{M}_F(a_i)\,\nabla_{a_i} W_F(a_1,\dots,a_N),
\label{MobilityTensor}
\end{align}
where $\mathbb{M}_F(a_i)$ is the Finslerian mobility tensor, the inverse of the local response tensor $\mathbb{T}_F(a_i)=\partial_\xi^2(\frac 12 F^{*2})(a_i,d\phi_\varepsilon(a_i))$ introduced in Theorem~\ref{thm:EffectiveForceTensor}.  
\end{lemma}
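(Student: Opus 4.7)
The plan is to unfold the Finsler gradient on the right-hand side of \eqref{GradientFlowFinsler} by means of the Legendre transform set up in \eqref{def:Legendre-Hessian}, namely $\nabla_F u=\mathcal{L}_x(du)$ with $\mathcal{L}_x=\partial_\xi(\tfrac 12 F^{*2})$. Applied to the configuration-space function $W_F$ at a vortex point $a_i$, this identifies the Finsler gradient as $\nabla_{a_i}^{(F)}W_F=\mathcal{L}_{a_i}(d_{a_i}W_F)$, so the lemma amounts to showing that the (in general nonlinear) map $\mathcal{L}_{a_i}$ acts linearly on $d_{a_i}W_F$ through the tensor $\mathbb{T}_F(a_i)$ introduced in Theorem~\ref{thm:EffectiveForceTensor}.

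First I would fix the base covector at which to linearize $\mathcal{L}_{a_i}$. The expansion \eqref{energy-expansion} together with the Green-function representation \eqref{WF-definition} shows that the leading $a_i$-dependence of $W_F$ is channeled through the local phase of the minimizer $\psi_\varepsilon$, whose differential at $a_i$ is $d\phi_\varepsilon(a_i)$; under the separation regime of Assumption~\ref{ass:Separation} this identification is uniform in $\varepsilon$, so $d\phi_\varepsilon(a_i)$ is the natural base point. Then I would exploit the $1$-homogeneity of $\mathcal{L}_x$: since $\mathcal{L}_x(\lambda\xi)=\lambda\mathcal{L}_x(\xi)$, Euler's identity yields $D\mathcal{L}_x(\xi_0)\xi_0=\mathcal{L}_x(\xi_0)$, and a first-order Taylor expansion at $\xi_0=d\phi_\varepsilon(a_i)$ produces
\begin{align*}
\mathcal{L}_{a_i}\bigl(d_{a_i}W_F\bigr)=\mathbb{T}_F(a_i)\,d_{a_i}W_F+O\bigl(|d_{a_i}W_F-d\phi_\varepsilon(a_i)|^2\bigr),
\end{align*}
with $\mathbb{T}_F(a_i)=\partial^2_\xi(\tfrac 12 F^{*2})(a_i,d\phi_\varepsilon(a_i))$. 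Substituting into \eqref{GradientFlowFinsler} gives $\dot a_i=-\mathbb{T}_F(a_i)\,d_{a_i}W_F$ to leading order, and setting $\mathbb{M}_F(a_i)$ equal to the matrix acting on the gradient covector in this identity produces the claimed form \eqref{MobilityTensor}.

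The main obstacle I anticipate is keeping the duality conventions consistent: as a tensor on $T^{*}_{a_i}M$, $\mathbb{T}_F(a_i)$ is intrinsically the up-stairs (co-metric) object $g^{ab}_F(a_i,d\phi_\varepsilon(a_i))$, so reading the force-to-velocity operator $\mathbb{M}_F$ as the ``inverse'' of $\mathbb{T}_F$ must be interpreted through the matrix-inverse relation between the up-stairs tensor $g^{ab}_F$ and the down-stairs fundamental tensor $g_{ab}^F(a_i,\nabla_F\phi_\varepsilon(a_i))$ that acts on velocity variables; in index notation the lemma reads $\dot a_i^{\,a}=-g^{ab}_F(a_i)\,\partial_b W_F$, consistent with both the Legendre identity and the physical force-to-velocity interpretation. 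Once this convention is fixed, the lemma reduces to the linearized Legendre identity above, and the only genuine analytic input is the compatibility of the configuration-space differential $d_{a_i}W_F$ with the phase covector $d\phi_\varepsilon(a_i)$—precisely what Assumption~\ref{ass:Separation} together with the Green-function expansion of $W_F$ guarantees.
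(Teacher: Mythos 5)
Your proposal follows the same basic route as the paper's proof: unfold $\nabla_{a_i}^{(F)}W_F$ through the Legendre map $\mathcal{L}_{a_i}=\partial_\xi(\tfrac12F^{*2})$, differentiate the Legendre map to produce the tensor $\mathbb{T}_F(a_i)=\partial_\xi^2(\tfrac12F^{*2})(a_i,d\phi_\varepsilon(a_i))$, and substitute into \eqref{GradientFlowFinsler}. However, the two arguments part ways at the decisive step, and the divergence cannot be dismissed as a convention. You use the paper's own definition from \eqref{def:Legendre-Hessian}, $\nabla_F u=\mathcal{L}_x(du)$, together with Euler's identity for the $1$-homogeneous map $\mathcal{L}_x$, and correctly conclude
\begin{align*}
\dot a_i=-\mathcal{L}_{a_i}(d_{a_i}W_F)\approx-\mathbb{T}_F(a_i)\,d_{a_i}W_F,
\end{align*}
i.e.\ the mobility operator is $\mathbb{T}_F$ itself. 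The paper's proof instead writes $\nabla_{a_i}^{(F)}W_F=\mathcal{L}_{a_i}^{-1}(d_{a_i}W_F)$ --- with an inverse that contradicts \eqref{def:Legendre-Hessian}, the proof of Lemma~\ref{lem:GradientExistence}, and the proof of Theorem~\ref{thm:EffectiveForceTensor} (where the response is $\mathbb{T}_F$ applied directly, as in \eqref{EffectiveForce}) --- and thereby arrives at $\mathbb{M}_F=\mathbb{T}_F^{-1}$ as stated in the lemma. So you have in effect exposed an internal inconsistency of the paper rather than made an error: your formula is the one compatible with Sections~2 and~4, while the lemma's literal claim is compatible only with the (unannounced) redefinition of the Finsler gradient used in its own proof.

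Where your write-up does fall short is in the attempt to reconcile your result with the lemma's wording via the up-stairs/down-stairs remark. If $\mathbb{M}_F$ is literally the matrix inverse of $\mathbb{T}_F$, then $-\mathbb{T}_F^{-1}d_{a_i}W_F$ and $-\mathbb{T}_F\,d_{a_i}W_F$ agree only when $\mathbb{T}_F^2=I$; observing that $g^{ab}_F$ and $g_{ab}^F$ are mutually inverse does not make the two formulas coincide, it only relabels which of them one chooses to call $\mathbb{M}_F$. You should state plainly that your derivation yields $\mathbb{M}_F=\mathbb{T}_F$ and that the factor ``inverse'' in the statement must be dropped (or the gradient convention changed throughout) for the lemma to be consistent with the rest of the paper. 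On the positive side, your discussion of the base covector for the linearization --- why $\mathbb{T}_F$ is evaluated at $d\phi_\varepsilon(a_i)$ rather than at $d_{a_i}W_F$, and the resulting quadratic error term --- is more careful than the paper's, which passes over this point in silence.
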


\begin{proof}
By definition of the Finsler gradient, $\nabla_{a_i}^{(F)}W_F = \mathcal{L}_{a_i}^{-1}(d_{a_i} W_F)$, where $\mathcal{L}_{a_i}$ is the Legendre transform $\xi\mapsto \partial_\xi(\tfrac 12 F^{*2})$.  
Differentiating this transform gives the symmetric tensor $\mathbb{T}_F(a_i)=\partial^2_\xi(\tfrac 12F^{*2})(a_i,d\phi_\varepsilon(a_i))$, and its inverse defines $\mathbb{M}_F=\mathbb{T}_F^{-1}$.  
Substituting $\dot{a}_i=-\nabla_{a_i}^{(F)} W_F$ leads to the form \eqref{MobilityTensor}.  
Physically, $\mathbb{M}_F$ quantifies the mobility of vortices in different directions, encoding anisotropic friction and dissipation in the superconducting medium.  
\end{proof}

Equation \eqref{MobilityTensor} demonstrates that Finsler anisotropy induces a tensorial mobility law: vortices move faster along directions where $F$ assigns smaller dual norms, and slower where $F^*$ is larger.  
In Randers-type metrics $F=\alpha+\beta$ with $\|\beta\|_\alpha\ll 1$, this leads to a drift term proportional to $\mathbb{A}_\beta$, generating a systematic lateral motion analogous to Hall-like effects in anisotropic superconductors.  

\begin{theorem}\label{thm:AsymptoticAlignment}
Let the vortex centers evolve by \eqref{GradientFlowFinsler} under a weakly anisotropic Randers metric $F=\alpha+\beta$ with $\|\beta\|_\alpha\ll 1$.  
Then, to first order in $\|\beta\|_\alpha$, the trajectories satisfy
\begin{align}
\dot{a}_i = - \nabla_\alpha W_\alpha(a_1,\dots,a_N)
+ \mathbb{A}_\beta\,\nabla_\alpha W_\alpha(a_1,\dots,a_N)
+ O(\|\beta\|_\alpha^2),
\label{AnisotropicDrift}
\end{align}
where $\mathbb{A}_\beta$ is the antisymmetric tensor associated with $d\beta$.  
Consequently, each vortex experiences a transverse drift orthogonal to the isotropic force, resulting in a steady reorientation of the vortex lattice toward the anisotropic axis of the material.
\end{theorem}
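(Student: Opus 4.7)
The plan is to linearize the Finsler gradient flow \eqref{GradientFlowFinsler} around the Riemannian structure $\alpha$ by simultaneously expanding the mobility operator $\mathbb{M}_F$ and the renormalized energy $W_F$ to first order in $\|\beta\|_\alpha$, and then to isolate the surviving transverse contribution. As a first step, I would invoke Lemma~\ref{lem:FinslerMobility} to rewrite the dynamics in the equivalent form $\dot{a}_i = -\mathbb{M}_F(a_i)\, d_{a_i} W_F$, so that the anisotropic corrections can be tracked separately through the mobility tensor and through the energy differential.

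Using the Randers expansion $F^{*}(\xi) = \alpha^{*}(\xi) - \beta(\xi) + O(\|\beta\|_\alpha^2)$ together with \eqref{MobilityOp}, the mobility tensor decomposes as $\mathbb{M}_F = I - \mathbb{S}_\beta + \mathbb{A}_\beta + O(\|\beta\|_\alpha^2)$, where $\mathbb{S}_\beta$ is the symmetric first-order correction from \eqref{SbDef} and $\mathbb{A}_\beta$ is the antisymmetric generator produced by $d\beta$. On the energy side, I would perturb the Finsler Laplacian as $\Delta_{F,\mu_F} = \Delta_\alpha + \mathcal{P}_\beta + O(\|\beta\|_\alpha^2)$ for a first-order differential operator $\mathcal{P}_\beta$ depending linearly on $\beta$ and its covariant derivatives, and then apply a standard resolvent-difference argument on the mean-zero subspace provided by Proposition~\ref{prop:selfadjoint} to derive the Hadamard-type identity $G_F = G_\alpha + \delta_\beta G + O(\|\beta\|_\alpha^2)$. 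Substituting this expansion into \eqref{WF-definition} and differentiating yields $d_{a_i} W_F = d_{a_i} W_\alpha + d_{a_i}(\delta_\beta W) + O(\|\beta\|_\alpha^2)$, so that multiplying the two expansions and collecting the first-order contributions produces the structure
\begin{align*}
\dot{a}_i = -\nabla_\alpha W_\alpha + \mathbb{A}_\beta\,\nabla_\alpha W_\alpha + \mathcal{R}_\beta + O(\|\beta\|_\alpha^2),
\end{align*}
where $\mathcal{R}_\beta$ is the combined symmetric first-order term arising jointly from $\mathbb{S}_\beta$ and from the regular part of $\delta_\beta G$.

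The main obstacle is to verify that $\mathcal{R}_\beta$ is absorbable into the leading Riemannian dynamics, so that only the transverse contribution $\mathbb{A}_\beta\,\nabla_\alpha W_\alpha$ persists as the effective first-order correction in \eqref{AnisotropicDrift}. This absorption reflects the Randers decomposition of $\beta$ into an exact (gauge) component, which can be reabsorbed through a phase redefinition compatible with the gauge invariance $(\psi,A)\mapsto(e^{i\chi}\psi,A+d\chi)$ of the functional \eqref{GF-definition}, and a physical curvature component encoded by $d\beta$ that generates $\mathbb{A}_\beta$, in close analogy with the separation of a vector potential from its magnetic field. The delicate analytical step is to match the symmetric contribution coming from the resolvent perturbation of $G_F$ with the symmetric mobility correction $\mathbb{S}_\beta$, which I would carry out either by a direct parametrix computation based on Theorem~\ref{thm:Green} or, more systematically, by a first-variation formula for the Finsler Green kernel under smooth metric deformations, exploiting uniform ellipticity of $\Delta_{F,\mu_F}$ over a neighborhood of $\alpha$ in the Randers family. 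Once this identification is in place, the skew-symmetry of $\mathbb{A}_\beta$ with respect to $\langle\cdot,\cdot\rangle_\alpha$ immediately yields that the extra drift is orthogonal to the isotropic gradient force, and the principal axis of $\mathbb{A}_\beta$ determines the direction of lattice reorientation asserted in the theorem.
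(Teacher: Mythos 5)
Your overall route coincides with the paper's: expand the Randers co-metric $F^{*2}(\xi)=|\xi|_\alpha^2-2\langle\beta,\xi\rangle_\alpha+O(\|\beta\|_\alpha^2)$, linearize the Legendre map, and substitute into \eqref{GradientFlowFinsler}. The paper's proof, however, stops at the bare assertion $\nabla_{a_i}^{(F)}W_F=(I-\mathbb{A}_\beta)\nabla_\alpha W_\alpha+O(\|\beta\|_\alpha^2)$ and never tracks the first-order change of the Green kernel; you correctly recognize that there are \emph{two} symmetric first-order contributions --- the correction $\mathbb{S}_\beta$ from \eqref{RandersTensorSym} in the linearized Legendre/mobility map, and the perturbation $\delta_\beta G$ of $G_F$ (hence of $W_F$ itself) induced by the first-order change of $\Delta_{F,\mu_F}$ --- and that the theorem is only true as stated if their combination $\mathcal{R}_\beta$ vanishes or can be absorbed. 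That identification is the real content of the theorem, and it is exactly the step your sketch does not close.

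The specific argument you offer for closing it does not work as stated. The invariance $(\psi,A)\mapsto(e^{i\chi}\psi,A+d\chi)$ of \eqref{GF-definition} acts on the electromagnetic gauge field $A$, not on the Randers drift form $\beta$, which lives in the metric $F=\alpha+\beta$; a phase redefinition of $\psi$ cannot remove the exact part of $\beta$ from the kinetic norm $\|\cdot\|_{F^*}$, so the ``only $d\beta$ matters'' heuristic needs an independent proof here. Your alternative suggestions (a parametrix computation from Theorem~\ref{thm:Green}, or a first-variation formula for $G_F$ under metric deformation) are the right tools, but until one of them is actually carried out and the resulting symmetric term is shown to cancel against $\mathbb{S}_\beta$, the claim that the surviving first-order correction is purely the antisymmetric drift $\mathbb{A}_\beta\nabla_\alpha W_\alpha$ remains unverified. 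To be fair, the paper's own two-line proof has the same gap --- it simply does not acknowledge it --- so your sketch is a more honest roadmap, but it is a roadmap rather than a proof at the decisive step.
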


\begin{proof}
Expanding the Finsler co-metric $F^{*2}$ for a Randers metric yields
\begin{align}
F^{*2}(\xi) = |\xi|_\alpha^2 - 2\langle \beta,\xi\rangle_\alpha + O(\|\beta\|_\alpha^2).
\end{align}
The corresponding gradient transformation gives
\begin{align}
\nabla_{a_i}^{(F)} W_F = (I-\mathbb{A}_\beta)\nabla_\alpha W_\alpha + O(\|\beta\|_\alpha^2).
\end{align}
Substituting into \eqref{GradientFlowFinsler} leads to \eqref{AnisotropicDrift}.  
The antisymmetric correction $\mathbb{A}_\beta\nabla_\alpha W_\alpha$ produces a velocity component orthogonal to the isotropic gradient, which in physical terms corresponds to a slow precession or skew motion of the vortex lattice induced by the anisotropic drift field.  
\end{proof}
 \section{Conclusion}
 The results of this work establish a coherent framework connecting Finsler geometry with the analytical and physical structure of the Ginzburg--Landau theory for anisotropic superconductors.  
 By deriving the Finslerian renormalized energy $W_F$ and analyzing its equilibrium, stability, and gradient-flow dynamics, we demonstrated that anisotropy can be fully encoded in the metric structure rather than through external parameters.  
 This approach unifies geometric analysis and vortex physics, providing a self-consistent description of anisotropic dissipation, elastic response, and collective motion within a single variational setting.  
 It offers a foundation for further exploration of nonlinear and non-reciprocal superconducting phenomena through the lens of Finsler geometry.


\end{document}